\newcommand{\ignore}[1]{}
\newcommand{\cQ}{\mathcal Q}
\newcommand{\cF}{\mathcal F}
\newcommand{\cM}{\mathcal M}
\newcommand{\cS}{\mathcal S}
\newcommand{\cG}{\mathcal G}
\newcommand{\cT}{\mathcal T}
\newcommand{\myparagraph}[1]{\vspace{3.5pt}\noindent \textbf{#1}}
\title{How to Tame Multiple Spending in Decentralized Cryptocurrencies}
\author{João Paulo Bezerra}
\email{joaopaulo.bezerra@telecom-paris.fr}
\affiliation{%
  \institution{LTCI, Télécom Paris, Institut Polytechnique de Paris}
  \country{France}
}
\author{Petr Kuznetsov}
\email{petr.kuznetsov@telecom-paris.fr}
\affiliation{%
  \institution{LTCI, Télécom Paris, Institut Polytechnique de Paris}
  \country{France}
}
\keywords{Quorum systems, decentralized trust, consistency measure, asset transfer, accountability}
\begin{document}




\begin{abstract}
The last decade has seen a variety of Asset-Transfer systems designed for decentralized environments.
To address the problem of \emph{double-spending}, these systems inherently make strong model assumptions and spend a lot of resources.
In this paper, we take a non-orthodox approach to the double-spending problem that might suit better realistic environments in which these systems are to be deployed.
We consider the \emph{decentralized trust} setting, where each user may independently choose who to trust by forming its local quorums.
In this setting, we define \emph{$k$-Spending Asset Transfer}, a relaxed version of asset transfer which bounds the number of times the same asset can be spent.
We establish a precise relationship between the decentralized trust assumptions and $k$, the optimal \emph{spending number} of the system.
\end{abstract}

\maketitle

The conference version of this paper is available at~\cite{bacrypto2022}.

\begin{acks}
This work was supported by TrustShare Innovation Chair.
\end{acks}

\section{Introduction}

\myparagraph{Fault models and quorum systems.}
Distributed protocols, such as consensus and broadcast, are generally built to be resilient against arbitrary (Byzantine) faults of system members.
To maintain consistency and liveness, these protocols typically have to assume that only a fraction of system members are Byzantine.
In the special case of an \emph{uniform} fault model, where faults of system members are identically and independently distributed,  bounds on the number $f$ of Byzantine members that can be tolerated are well known: less than half of system members ($f<n/2$) in synchronous networks (using digital signatures)~\cite{katz2006expected}, and less than one third ($f<n/3$) in asynchronous or partially synchronous networks~\cite{bracha1983resilient}.

More general fault models can be captured via \textit{quorum systems}~\cite{malkhi1998byzantine,vukolic2013origin}, collections of subsets of system participants, called \emph{quorums}, that meet two conditions: in every system run, (i)~\emph{every} two quorums should have at least one correct participant in common and (ii)~\emph{some} quorum should only contain correct participants.
Intuitively, quorums encapsulate \emph{trust} the system members express to each other. 
Every quorum can act on behalf of the whole system: an update or a query on the data is considered safe if it involves a (sufficiently trusted) quorum of replicas.

\myparagraph{Decentralized quorums.}
Conventionally, quorum assumptions are centralized: all participants share the same quorum system.
In some large-scale distributed systems, it might be, however, difficult to expect that all participants come to the same trust assumptions.
Recently, the quorum-based approach to system design has been explored in completely new way.
It started with system implementations~\cite{schwartz2014ripple,mazieres2015stellar} that allowed their users to not necessarily hold the same assumptions of who to trust, i.e., to maintain \emph{local} quorum systems.
Based on its local knowledge, a system member might have its own idea about which subsets of other participants are trustworthy and which are not.
We come, therefore, to the model of \emph{decentralized quorums}: each system member maintains its own quorum system. 

Great effort have been invested into improving protocols designed for uniform fault models~\cite{pbft,zyzzyva,sbft,hotstuff}, or in understanding which conditions on individual quorum systems are necessary and sufficient, so that some well-defined subset of participants can solve a problem \cite{garcia2018federated,cachin2020asymmetric,losa2019stellar}.
However, little is understood about the “damage” that Byzantine processes might cause if these conditions do not hold, e.g., in the decentralized quorum system model.
Intuitively, the more Byzantine processes there are or more strategically they are located in decentralized quorums, the more they can affect the system's consistency. 
But what exactly does ``more'' mean here? 

\myparagraph{Asynchronous cryptocurrencies.}
In this paper, we study this question on the example of asset-transfer systems (or \emph{cryptocurrencies}). 
Conventionally, the major challenge addressed by a cryptocurrency is to prevent \emph{double spending}, when a malicious or misconfigured user manages to spend the same coin more than once.
As was originally claimed by Nakamoto~\cite{nakamoto2008bitcoin}, preventing  double spending in systems with mutual distrust requires honest users to agree on the order in which the transactions must be executed, i.e., to solve the fundamental problem of \emph{consensus}~\cite{fischer1985impossibility}.
Bitcoin achieves probabilistic \emph{permissionless} consensus assuming a  synchronous system and using the proof-of-work mechanism. 
The protocol is notoriously energy-consuming and slow.
Since then, a long line of systems used consensus for implementing cryptocurrencies in both permissionless and permissioned contexts.

It has been later observed that cryptocurrencies do not always require consensus in general~\cite{gupta2016non,cons-crypto}.
It turns out that it is not always necessary to maintain a totally ordered set of transactions, a specific  partial order may suffice. 
Intuitively, if we assume that each account has a single dedicated owner, it is sufficient to agree on the order of outgoing  transactions \emph{per account}.
Transactions operating on different accounts can be ordered arbitrarily without affecting correctness. 
Double spending is excluded, as no user can publish ``conflicting'' transactions on its account (spending more money than its account holds).
Recently proposed asynchronous (\emph{consensus-free}) cryptocurrencies~\cite{fastpay,astro-dsn}
exhibit significant advantages over consensus-based protocols in terms of scalability, performance and  robustness.
However, as they still rely on classical quorum systems, they are hardly applicable at a large scale.

\myparagraph{Contributions.}
In this paper, we explore the potential of decentralized quorums in implementing asynchronous cryptocurrencies.
Naturally, this model allows us to formally capture the \emph{double spending} phenomenon. 
In a way, we mimic the principle followed by real-world financial systems, where double spending is not an enemy but a routine phenomenon.

We introduce \emph{$k$-Spending Asset Transfer}, a relaxed cryptocurrency abstraction suitable for decentralized trust models. 
Notice that in this model, quorums chosen by correct processes might not be globally consistent, i.e., may not all intersect in a correct process.
Byzantine processes can exploit this by enforcing correct processes to accept conflicting transactions with the same input,
resulting in \emph{multiple spending}.

Intuitively, a $k$-spending asset-transfer system guarantees that an asset can be spent at most $k$ times.
The adversary may exploit the fact that quorums chosen by correct processes might not have a correct overlap, and make them accept conflicting transactions.
However, any instance of multiple-spending that affects correct participants should be eventually detected and a proof of misbehaviour against the Byzantine process should be published.   

As a bold analogy, one can think of a global financial trading system, where every national economy benefits from mutual trust, while cross-border interactions are less reliable.
But if the lack of trust is exploited by a cheating trader, correct participants should eventually be able to detect and punish the cheater, e.g., by excluding from the system.

We show how the parameter $k$ in $k$-spending asset transfer relates to the structure of the underlying quorum assumptions.
We visualize these assumptions via a family $\cG_{\cQ,\cF}$ of graphs, one for each possible faulty set $F\in \cF$ and each quorum map $S$, mapping each process $p$ to an element in its local quorum system $\cQ(p)$.     
It turns out that the optimal number of times a coin can be spent in this system is precisely the maximal \emph{independence number} over graphs in $\cG_{\cQ,\cF}$. 

Thus, our contributions are three-fold. 
We introduce the abstraction of $k$-spending asset transfer that defines a precise bound $k$ on the number of times a given asset can be spent.
We represent decentralized trust assumption in the form of a family of trust graphs and show that its maximum independence number gives a lower bound on $k$. 
We present a $k$-asset transfer implementation that shows that the bound is tight.
In addition, the algorithm maintains an accountability mechanism that keeps track of multiple spending and publishes evidences of misbehavior.  

%
%
%


\myparagraph{Road map.}
The rest of the paper is organized as follows.
In Section~\ref{sec:model} we present our system model. 
Section~\ref{sec:dc_trust} introduces a graph representation of trust, used later in the paper to prove lower bounds in the protocols.
In Section~\ref{sec:transfer_system} we give the specification of \emph{$k$-Spending Asset Transfer} ($k$-SAT) abstraction and present a protocol for implementing it.
We show that our $k$-SAT algorithm is optimal in Section~\ref{sec:lower_bounds}, by relating it to a relaxed broadcast abstraction: \emph{$k$-Consistent Broadcast} ($k$-CB).
We overview related work in Section~\ref{sec:related_work}. Finally, we discuss the results and future work in Section~\ref{sec:conclusion}.

\section{System Model}
\label{sec:model}

\myparagraph{Processes.}
A system is composed of a set of \emph{processes} $\Pi = \{p_1,...,p_n\}$. 
Every process is assigned an \emph{algorithm} (we also say \emph{protocol}), an automaton defined as a set of possible \textit{states} (including the \textit{initial state}), a set of \textit{events} it can produce and a transition function that maps each state to a corresponding new state. 
An event is either an $input$ (a call operation from the application or a message received from another process) or an $output$ (a response to an application call or a message sent to another process); \textit{send} and \textit{receive} denote events involving communication between processes.

\myparagraph{Executions and failures.}
A \textit{configuration} $C$ is a collection of states of all processes. In addition, $C^0$ is used to denote a special configuration where processes are in their initial states.
An \textit{execution} (or a \textit{run}) $\Sigma$ is a sequence of events,
where every event is associated with a distinct process and
every \textit{receive}($m$) event has a preceding matching \textit{send}($m$) event. 
A process \textit{misbehaves} in a run (we also call it \emph{Byzantine}) if it produces an event that is not prescribed by the assigned protocol, given the preceding sequence of events, starting from the initial configuration $C^0$. 
If a process does not misbehave, we call it \emph{benign}.
In an infinite run, a process \textit{crashes} if it prematurely stops producing events required by the protocol; 
if a process is benign and never crashes we call it \emph{correct}, and it is 
\emph{faulty} otherwise. 
Let $\textit{part}(\Sigma)$ denote the set of processes that produce events in an execution $\Sigma$.

\myparagraph{Channels.}
Every pair of processes communicate over a \textit{reliable channel}: in every infinite run, if a correct process $p$ sends a message $m$ to a correct process $q$, $m$ eventually arrives, and $q$ receives a message from $p$ only if $p$ sent it.
We impose no synchrony assumptions. 
In particular, we assume no bounds on the time required to convey a message from one correct process to another. 
\myparagraph{Digital signatures.} We use asymmetric cryptographic tools: a pair public-key/private-key is associated with every process in $\Pi$~\cite{cachin2011introduction}.  
The private-key remains secret to its owner and can be used to produce a \emph{signature} for a statement, %
while the public-key is known by all processes and is used to \emph{verify} that a signature is valid. 
Every process have access to operations \textit{sign} and \textit{verify}: \textit{sign} takes the process' identifier and a bit string as parameters and returns a signature, while \textit{verify} takes the process' identifier, a bit string and a signature as parameters and return $b \in \{TRUE,FALSE\}$.
We assume a computationally bound adversary: no process can forge the signature for a statement of a benign process.

\myparagraph{Trust assumptions.}
We now define our decentralized trust model.  
A \emph{quorum system map} $\cQ: \Pi \rightarrow 2^{2^{\Pi}}$ provides every process with a set of process subsets: 
for every process $p$, $\cQ(p)$ is the set of \emph{quorums of $p$}.  
We assume that $p$ includes itself in each of its quorums: $\forall Q\in \cQ(p): p\in Q$.
Intuitively, $\cQ(p)$ consists of sets of processes $p$ expects to appear correct in system runs. 
From $p$'s perspective, for every quorum  $Q\in\cQ(p)$, there must be an execution in which $Q$ is precisely the set of correct processes. 
However, these expectations may be violated by the environment.
We therefore introduce a \emph{fault model} $\cF\subseteq 2^{\Pi}$ (sometimes also called an \emph{adversary structure}) stipulating which process subsets can be faulty.   
We assume \emph{inclusion-closed} fault models that, intuitively, do not force processes to fail: $\forall F\in \cF,\; F'\subseteq F: \; F'\in \cF$. 
From now on, we consider only executions $\Sigma$ that \emph{complies with $\cF$}, i.e., the set of faulty processes in $\Sigma$ is in $\cF$. 

Given a faulty set $F\in \cF$, a process $p$ is called \emph{live in $F$} if it has a \emph{live quorum in $F$}, i.e., $\exists Q\in \cQ(p): Q \cap F = \emptyset$.
%
%
For example, let the uniform \emph{$f$-resilient} fault model: $\cF=\{F\subseteq \Pi: |F|\leq f\}$.
If $\cQ(p)$ includes all sets of $n-f$ processes, then $p$ is guaranteed to have at least one live quorum in every execution.
On the other hand, if $\cQ(p)$ expects that a selected process $q$ is always correct ($q\in \cap_{Q\in\cQ(p)} Q$), then $p$ is not live in any execution with a faulty set such that $q\in F$. %

In the rest of the paper, we consider a \emph{trust model} $(\cQ,\cF)$, where $\cQ$ is a quorum map and $\cF$ is a fault model.

\section{Graph Representation of Trust}
\label{sec:dc_trust}

We use undirected graphs to depict possible scenarios of executions with trust assumptions $(\cQ,\cF)$.
Intuitively, each graph represents a situation where a correct process hears from a quorum before accepting a statement in a protocol.
Let $S: \Pi \rightarrow 2^{\Pi}, S(p) \in \cQ(p)$, be a map providing each process with one of its quorums,
and $\cS$ be the family of all possible such maps.
For a fixed faulty set $F \in \cF$ and $S \in \cS$, the graph $G_{F,S}$ is a tuple $(\Pi_{F},E_{F,S})$ defined as follows:

\begin{itemize}
    \item $\Pi_{F} = \Pi - F$, i.e.,  the set of correct processes;
    \item Nodes $p$ and $q$ are connected with an edge \textit{iff} their quorums $S(p)$ and $S(q)$ intersect in a correct process, i.e., $(p, q) \in E_{F,S} \Leftrightarrow S(p) \cap S(q) \not\subseteq F$.
\end{itemize}

\begin{example}
\label{ex:graphs}
Consider a system of four processes, where $\Pi = \{p_1,p_2,p_3,p_4\}$, $\cF = \{\{p_3\}\}$, and the individual quorum systems are:
\[ \cQ(p1) = \{\{p_1,p_2,p_3\}\} \ \ \ \ \   \cQ(p2) = \{\{p_1,p_2\},\{p_2,p_4\}\} \]
\[ \cQ(p3) = \{\{p_1,p_2,p_4\}\} \ \ \ \ \   \cQ(p4) = \{\{p_2,p_4\},\{p_3,p_4\}\} \]

Consider an execution with $F = \{p_3\}$, the set of correct processes $\Pi_F$ is $\{p_1,p_2,p_4\}$.
Let $S_1 \in \cS$ be a quorum map for $\cQ$ such that $S_1(p_1) = \{p_1,p_2,p_3\}$, $S_1(p_2) = \{p_1,p_2\}$ and $S_1(p_4) = \{p_2,p_4\}$.
The quorums of every pair of correct processes in the resulting graph $G_{F,S_1}$ intersect in $p_2 \in \Pi_F$, thus resulting in a fully connected graph.
Now let $S_2 \in \cS$ be another quorum map such that $S_2(p_1) = \{p_1,p_2,p_3\}$, $S_2(p_2) = \{p_2,p_4\}$ and $S_2(p_4) = \{p_3,p_4\}$.
Since $S_2(p_1) \cap S_2(p_4) \subseteq F$, the resulting graph $G_{F,S_2}$ has a missing edge.
Figure \ref{ex1_graphs} depicts $G_{F,S_1}$ and $G_{F,S_2}$.
\end{example}

\begin{figure}
\centering
\includegraphics[scale=0.6]{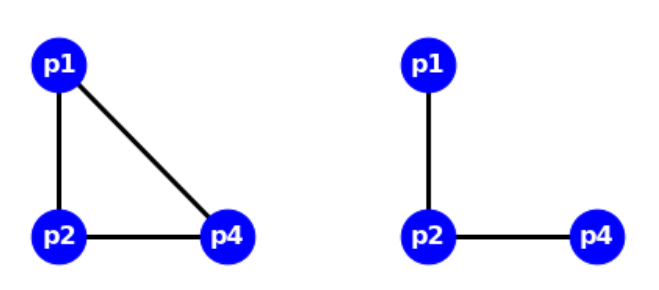}
\caption{Graph structures of Example~\ref{ex:graphs}: $G_{F,S_1}$ and $G_{F,S_2}$ respectively.}
\label{ex1_graphs}
\end{figure}

\myparagraph{Inconsistency number.}
We recall two useful definitions from graph theory: \emph{Independent Set} and \emph{Independence Number}.

\begin{definition} [Independent Set]

A set $C \subseteq V$ is an \emph{independent set} of $G = (V,E)$ iff no pair of nodes in $C$ is adjacent,
i.e., $\forall p,q \in C: (p,q) \notin E$.
$C$ is \emph{maximum} iff for every independent set C' of $G$: $|C'| \leq |C|$.

\end{definition}

\begin{definition} [Independence Number]

The \emph{independence number} of $G$ is the size of its maximum independent set(s).

\end{definition}

Given  the pair $(\cQ,\cF)$, 
we note $\cG_{\cQ,\cF}$ the family of graphs including all possible $G_{F,S}$, where $F \in \mathcal{F}$ and $S \in \cS$.

\begin{definition}[Inconsistency Number]
The \emph{inconsistency number} of $(\cQ,\cF)$, noted $\lambda(\cG_{\cQ,\cF})$, is the highest independence number among all $G_{F,S} \in \cG_{\cQ,\cF}$.
Formally,
Let $\mu:\cG_{\cQ,\cF} \rightarrow N$ map each $G_{F,S} \in \cG_{\cQ,\cF}$ to its independence number,
then $\lambda(\cG_{\cQ,\cF}) = max(\{\mu(G_{F,S})|G_{F,S} \in \cG_{\cQ,\cF}\})$.
\end{definition}

\begin{example}
\label{ex:indepedence}
Coming back to Example~\ref{ex:graphs}, the graph $G_{F,S_1}$ is fully connected, thus it has independence number $1$. 
On the other hand, the maximum independent set in $G_{F,S_2}$ is $\{p1,p4\}$, as a result, $G_{F,S_2}$ has independence number $2$.
Now consider a pair $(\cQ,\cF)$ where $\cQ$ and $\cF$ are the same as in Example~\ref{ex:graphs} (with this assumption only $p_3$ may fail in any execution).
Since $\forall S \in \cS,\forall F \in \cF: S(p_1) \cap S(p_2) \not\subseteq F$, it is easy to see that no graph has independence number higher than $2$ in $\cG_{\cQ,\cF}$, thus the inconsistency number of $(\cQ,\cF)$ is $2$.
\end{example}

\myparagraph{Computing inconsistency parameters.} %
A straightforward approach to find the inconsistency number of $(\cQ,\cF)$ consists in computing the independence number of all graphs $G_{F,S} \in \cG_{\cQ,\cF}$. 
The problem of finding the \textit{maximum independent set} of a graph, and consequently its independence number, is the \textit{maximum independent set problem} \cite{tarjan1977finding},
known to be \textit{NP-complete} \cite{miller2013complexity}.
Also, the number of graphs in $\cG_{\cQ,\cF}$ may exponentially grow with the number of processes.
However, as the graphs might have similar structures (for example, the same quorums for some processes may appear in multiple graphs),
in many practical scenarios, we should be able to avoid redundant calculations and reduce the overall computational costs, as we show for the uniform model.

\myparagraph{Inconsistency in the uniform model.} Centralized quorum systems generate graphs that are similar in structure and are therefore easier to analyse. Given a uniform quorum system $\cQ_{u}$,
we show how to calculate the spending number of $(\cQ_{u},\cF_{u})$, where $F_{u}$ includes every subset of processes with size $\leq f$.

\begin{theorem}
\label{th:classic_inc}
Let $(\cQ_{u},\cF_{u})$ be a uniform quorum system  with $n$ processes, where quorums have size $q$ and at most $f$ processes might fail. The inconsistency number of $(\cQ_{u},\cF_{u})$ is $\lfloor(\frac{n-f}{q-f})\rfloor$.
\end{theorem}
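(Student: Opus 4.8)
The plan is to establish the two inequalities $\lambda(\cG_{\cQ_u,\cF_u})\le\lfloor\frac{n-f}{q-f}\rfloor$ and $\lambda(\cG_{\cQ_u,\cF_u})\ge\lfloor\frac{n-f}{q-f}\rfloor$ separately. Here $\cQ_u(p)=\{Q\subseteq\Pi:\ p\in Q,\ |Q|=q\}$ and $\cF_u=\{F\subseteq\Pi:\ |F|\le f\}$, and I tacitly assume the non-degenerate regime $f<q\le n$ (otherwise the stated quantity is not even well defined). The single observation driving both bounds is that an independent set $C$ of $G_{F,S}$ is precisely a set of correct processes whose assigned quorums pairwise intersect only inside $F$, so that the ``correct parts'' $S(p)\setminus F$, for $p\in C$, form pairwise disjoint subsets of $\Pi\setminus F$.

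For the upper bound I would fix an arbitrary $F\in\cF_u$, an arbitrary $S\in\cS$, and an independent set $C$ of $G_{F,S}$. By definition of $E_{F,S}$, for distinct $p,q\in C$ we have $S(p)\cap S(q)\subseteq F$, hence $(S(p)\setminus F)\cap(S(q)\setminus F)=\emptyset$. Each of these sets has size $|S(p)\setminus F|=q-|S(p)\cap F|\ge q-|F|$, and all of them sit inside $\Pi\setminus F$, which has $n-|F|$ elements; double counting gives $|C|\,(q-|F|)\le n-|F|$, i.e. $|C|\le\frac{n-|F|}{q-|F|}$. It then remains to note that $x\mapsto\frac{n-x}{q-x}$ is non-decreasing on $[0,q)$ whenever $n\ge q$ (its derivative equals $\frac{n-q}{(q-x)^2}\ge 0$), so $|C|\le\frac{n-|F|}{q-|F|}\le\frac{n-f}{q-f}$, and since $|C|$ is an integer, $|C|\le\lfloor\frac{n-f}{q-f}\rfloor$. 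As $F$, $S$, and $C$ were arbitrary, this bounds the independence number of every graph in $\cG_{\cQ_u,\cF_u}$.

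For the matching lower bound I would exhibit one graph attaining it. Pick any $F$ with $|F|=f$ and set $m=\lfloor\frac{n-f}{q-f}\rfloor$. Since $m(q-f)\le n-f=|\Pi\setminus F|$, I can choose pairwise disjoint sets $A_1,\dots,A_m\subseteq\Pi\setminus F$ with $|A_i|=q-f$, together with a representative $p_i\in A_i$ for each $i$. Define $S$ by $S(p_i)=A_i\cup F$ (a legitimate element of $\cQ_u(p_i)$: it has size $q$ and contains $p_i$) and $S(p)=$ an arbitrary quorum of $p$ for every other process. Then for $i\ne j$ we get $S(p_i)\cap S(p_j)=(A_i\cap A_j)\cup F=F$, so $(p_i,p_j)\notin E_{F,S}$; hence $\{p_1,\dots,p_m\}$ is an independent set of $G_{F,S}$ and $\mu(G_{F,S})\ge m$. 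Combined with the upper bound this yields $\lambda(\cG_{\cQ_u,\cF_u})=m=\lfloor\frac{n-f}{q-f}\rfloor$.

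I do not expect a deep obstacle: the argument is a pigeonhole/double-counting estimate paired with an explicit construction. The two points that need mild care are (i) the monotonicity step that justifies replacing a general faulty set of size $|F|\le f$ by one of size exactly $f$ — without it one might wrongly suspect that a smaller $F$ could help — and (ii) verifying that the quorums built for the lower bound genuinely belong to the uniform quorum system (size $q$, containing the owner) and that the degenerate regimes ($q\le f$ or $q>n$) are ruled out by assumption, so that $m\ge 1$ and the formula makes sense.
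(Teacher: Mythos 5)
Your proof is correct and rests on the same core counting idea as the paper's: the sets $S(p)\setminus F$ for $p$ in an independent set are pairwise disjoint, each of size at least $q-f$, and must all fit inside the $n-f$ correct processes. You are, however, more complete than the paper in two places. First, the paper's proof only argues the upper bound (it fixes $|F|=f$ and bounds the size of a maximum independent set); it never exhibits a graph actually attaining $\lfloor\frac{n-f}{q-f}\rfloor$, whereas your explicit construction with disjoint blocks $A_i$ of size $q-f$ padded by $F$ supplies the matching lower bound that the equality in the theorem statement requires. Second, your monotonicity argument for $x\mapsto\frac{n-x}{q-x}$ cleanly disposes of faulty sets of size strictly less than $f$, a case the paper silently skips by fixing $|F|=f$ at the outset. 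Both additions are worth keeping.
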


\begin{proof}
Fix any $F \in \cF_{u}$ of size $f$.
Let $G_{F,S} \in \cG_{\cQ_{u},\cF_{u}}$ be a graph whose independence number is the highest, and let $C_{max} = \{p_1,...,p_m\}$ be a maximum independent set in $G_{F,S}$.
Let $cor(Q)$ denote the number of correct processes in a quorum $Q$ and let $Q_i = S(p_i)$. It follows that $cor(Q_1)+...+cor(Q_m) \leq n-f$, since the quorums $Q_1,...,Q_m$ have no correct processes in common.
We can build a graph $G'_{F,S'}$ with an independent set $C' = \{p_1,...,p_m\}$ of the same size as $C_{max}$, where $\forall p \in C': F \subseteq S'(p)$, that is, the quorum for every $p \in C'$ includes all faulty processes.
It suffices to choose $S'(p_i)$ with any $q-f$ correct processes from $S(p_i)$, in addition to the $f$ faulty processes.
Now let $Q'_i \in S'(p_i)$. Since $cor(Q'_i) = q-f$ (which is the smallest amount of correct processes a quorum can have), there can be at most $k_{max} = \lfloor(\frac{n-f}{q-f})\rfloor$ such quorums $Q'_i$, therefore, at most $k_{max}$ processes in $C'$.
\end{proof}

\begin{example}
\label{ex:byz_inc}
A classical \emph{Byzantine quorum system} (BQS) uses quorums of size $q=2n/3+1$. This system is consistent as long as the faulty set has $f<n/3$ processes, but it becomes inconsistent otherwise.
As Theorem~\ref{th:classic_inc} implies, the vulnerability of the system grows with the number of faulty processes.
To illustrate this, Table~\ref{tab:tabBQS} shows how the inconsistency number varies with the number of faulty processes in a distributed system with $100$ processes.
\end{example}

\begin{table}
  \caption{Inconsistency numbers for classical BQS with 100 processes}
  \label{tab:tabBQS}
  \begin{tabular}{cccccccccccc}
    \toprule
    Number of faulty processes&0-33&34-50&51-55&56-58&59-60&61&62&63&64&65&66\\
    \midrule
    Inconsistency Number&1&2&3&4&5&6&7&9&12&17&34\\
  \bottomrule
\end{tabular}
\end{table}

\section{Asset Transfer System}
\label{sec:transfer_system}

\subsection{Preliminaries}

\myparagraph{Transactions.}
A transaction $tx \in \cT$ is a tuple $tx = (s,\tau,I)$,
where $s$ is the process identifier of the \emph{issuer}, $\tau: \Pi \rightarrow \mathbb{Z}^{+}_{0}$ is the \emph{output map} and $I \subseteq \cT$ is the set of \emph{input} transactions,
$tx$ is called \emph{outgoing from $s$} and \emph{incoming to} every $p$ such that $tx.\tau(p)>0$.
Also, every transaction in $tx.I$ must be \emph{incoming to} $tx.s$, i.e., $tx' \in tx.I \Rightarrow tx'.\tau(tx.s)>0$.
We assume that every account is owned by a dedicated process in $\Pi$.

We use the function \textit{inValue}$: \cT \rightarrow \mathbb{Z}^{+}_{0}$ to denote the sum of the amount sent to $s$ by the transaction inputs,
i.e., \textit{inValue($tx$)} $= \sum_{tx' \in I}tx'.\tau(s)$.
The function \textit{outValue}$: \cT \rightarrow \mathbb{Z}^{+}_{0}$ denotes the total amount spent in a transaction,
i.e., \textit{outValue($tx$)} $= \sum_{p \in \Pi}tx.\tau(p)$.
A transaction $tx$ is \emph{valid} \textit{iff} \textit{outValue($tx$)} $>0$ and \textit{outValue($tx$)} $=$ \textit{inValue($tx$)},
in another words, $tx$ is valid \textit{iff} the amount spent by its issuer $tx.s$ is non-zero and equal the amount $tx.s$ received in the inputs.
Since $tx$'s issuer might not send the entire value of its inputs to other processes,
in order for the transaction to be valid, we allow the remaining amount to be transferred back to the issuer in its output map.

Two distinct transactions $tx$ and $tx'$ \emph{conflict} if they are issued by the same process and share some input, i.e., $(tx.s = tx'.s) \wedge (tx.I \cap tx'.I \neq \emptyset)$.
%
We assume that the total stake is initially distributed in a special transaction $tx_{init} = (\perp, \tau_{init}, \emptyset)$ from the system originator.
The total stake of the system is therefore $\sum_{p\in\Pi}\tau_{init}(p)$.

\myparagraph{Transaction histories.}
A set of transactions $T\subseteq \cT$ is called a \emph{transaction history}.
We implicitly assume from this point on that each transaction in a history is signed by its issuer. 

$T$ generates a directed graph,
where each $tx \in T$ is a node and directed edges are drawn to $tx$ from its inputs.
Let $tx, tx' \in T$, if $tx$ is reachable from $tx'$ in this graph (i.e., there is a path from $tx'$ to $tx$), we say $tx$ depends on $tx'$.
A transaction history $T$ is \emph{well-formed} \textit{iff} it satisfies:

\begin{itemize}
    \item (T-Validity) $tx_{init} \in T \wedge \forall tx \in T, tx \neq tx_{init}: tx$ is valid;
    \item (Completeness) $\forall tx \in T, \forall tx' \in tx.I: tx' \in T$;
    \item (No-Conflict) $\forall tx, tx' \in T:$ $tx$ and $tx'$ do not conflict;
    \item (Cycle-Freedom) $\forall tx,tx' \in T:$ $tx$ depends on $tx'$ $\Rightarrow$ $tx \not\in tx'.I$.
\end{itemize}

We only consider well-formed histories from this point on.
The function \textit{balance$_{_T}: \Pi \rightarrow \mathbb{Z}$} applied to a transaction history $T$ determines the \emph{balance} of each process $w$ according to $T$:
\textit{balance$_{_T}$($w$)} is the difference between the sum of 
transfers to $w$ and the sum of transfers issued by $w$,
i.e., \[\textit{balance$_{_T}$($w$) $= \sum_{tx \in T}tx.\tau(w) \ - \sum_{tx \in T, tx.s = w}$\textit{outValue($tx$)}}\]

\begin{theorem}
Given a well-formed history $T$, for every process $w$, balance$_{_T}(w)$ $\geq 0$.
\end{theorem}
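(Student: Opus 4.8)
The plan is to prove the statement by induction on the structure of the transaction graph induced by $T$. Since $T$ is well-formed, Cycle-Freedom guarantees that the ``depends on'' relation is a strict partial order, so we can process transactions in a topological order: a transaction is considered only after all transactions it depends on (in particular, all of its inputs) have been accounted for. The natural inductive claim is stronger than the statement itself. For a well-formed history $T$ and a process $w$, define the \emph{unspent incoming value} of $w$ to be the sum, over transactions $tx \in T$ incoming to $w$ whose dependants in $T$ do not yet include an outgoing transaction of $w$ consuming $tx$, of $tx.\tau(w)$; I would instead phrase the invariant in terms of a pairing between outgoing transactions of $w$ and the input transactions they consume. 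Concretely: for every process $w$, the outgoing transactions of $w$ in $T$ consume pairwise disjoint sets of incoming transactions of $w$ (this is exactly No-Conflict restricted to $w$), and each such input transaction is itself in $T$ (Completeness) and incoming to $w$ (by the definition of $tx.I$ for an outgoing transaction of $w$).

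With that observation, the proof becomes a counting argument rather than a genuine induction. Fix $w$. By Completeness, every input of every outgoing transaction of $w$ lies in $T$; by the transaction well-formedness condition $tx' \in tx.I \Rightarrow tx'.\tau(tx.s) > 0$, each such input is incoming to $w$. By No-Conflict, distinct outgoing transactions of $w$ have disjoint input sets, so the sets $tx.I$ for $tx$ ranging over outgoing transactions of $w$ partition a subset of the incoming transactions of $w$ present in $T$. For a valid $tx$, T-Validity gives $\textit{outValue}(tx) = \textit{inValue}(tx) = \sum_{tx' \in tx.I} tx'.\tau(w)$. Summing over all outgoing transactions of $w$,
\[
\sum_{tx \in T,\, tx.s = w} \textit{outValue}(tx) \;=\; \sum_{tx \in T,\, tx.s=w}\ \sum_{tx' \in tx.I} tx'.\tau(w) \;\le\; \sum_{tx' \in T} tx'.\tau(w),
\]
where the inequality holds because the index set of the double sum on the left is a subset of $\{tx' \in T\}$ (disjointness ensures no double counting, and every $tx'.\tau(w) \ge 0$ since $\tau$ maps into $\mathbb{Z}^{+}_{0}$). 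The right-hand side is exactly $\sum_{tx \in T} tx.\tau(w)$, so subtracting it from both sides yields $\textit{balance}_{T}(w) \ge 0$.

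One subtlety to handle carefully is $tx_{init}$: it is the unique transaction with $s = \perp$, it has empty input set, and by T-Validity it need not be ``valid'' in the technical sense, so it is not an outgoing transaction of any $w \in \Pi$; it contributes only the nonnegative term $\tau_{init}(w)$ to the incoming side, which only helps. The main thing to get right is the disjointness/coverage bookkeeping — namely that the union $\bigcup_{tx.s = w} tx.I$ is a genuine \emph{disjoint} union of subsets of the incoming transactions of $w$ inside $T$ — since that is what licenses replacing the nested sum by a single sum over a subset of $T$ and then enlarging that subset to all of $T$ at the cost only of adding nonnegative terms. Everything else is the arithmetic identity $\textit{outValue} = \textit{inValue}$ supplied by T-Validity and the sign constraint on $\tau$.
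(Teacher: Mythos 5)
Your proof is correct and uses exactly the same ingredients, in the same roles, as the paper's own argument: T-Validity to equate each outgoing transaction's value with the sum of its inputs' contributions, Completeness to place those inputs in $T$, and No-Conflict to rule out double counting. The only difference is presentational — you make the paper's informal closing step (``the only remaining way to overspend is to reuse an input'') precise via the explicit disjoint double-sum inequality, which is a welcome tightening but not a different route.
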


\begin{proof}
Let $\sum_{tx \in T}tx.\tau(w)$ be the \emph{incoming stake} to $w$ and let $\sum_{tx \in T}$\textit{outValue($tx$)}, with $tx.s = w$, be the \emph{outgoing stake} from $w$.
Assume that balance$_{_T}$($w$) $< 0$, then the outgoing stake is greater than the incoming stake.
The initial transaction $tx_{init}$ may only send funds to $w$, and since every other transaction $tx \in T$ is valid, $tx$ must include inputs with enough funds to cover \textit{outValue($tx$)}.
From \emph{Completeness}, for every transaction $tx$ appearing in the sum of the outgoing stake, its inputs $tx' \in tx.I$ also appear in the sum of the incoming stake. 
Therefore, the only remaining way $w$ can spend more stake than it received is to use an input more than once, which is prevented by \emph{No-Conflict}.
\end{proof}
%

Although a well-formed history has no conflicting transactions, there may exist conflicts among distinct well-formed histories.
Given a collection of well-formed transaction histories $\Gamma$, a process $r$, and $tx$ an incoming transaction to $r$.
Let $I_{tx}^{r} \subseteq \mathcal{T}$ be the set of outgoing transactions from $r$,
each $tx' \in I_{tx}^{r}$ including $tx$ in its input and appearing in some $T_{i} \in \Gamma$,
\[
I_{tx}^{r}  = \{tx'\  |\  \exists T_{i} \in \Gamma: (tx' \in T_{i}) \wedge (tx \in tx'.I) \wedge (tx'.s = r)\}.
\]

Let $|I_{tx}^{r}|=k$, we say that process $r$ \textit{$k$-spends} $tx$ in $\Gamma$.
In other words, a process $k$-spends if it issued $k$ distinct transactions appearing in $\Gamma$ using the same input.
%

\begin{definition} [Spending Number]
Let $\Gamma$ be a collection of well-formed histories.
The \emph{spending number} of $\Gamma$, noted $\gamma(\Gamma)$, is the highest amount of times an input is spent by the same process in $\Gamma$. Formally,
\begin{center}
$\gamma(\Gamma) = max(\{|I_{tx}^{r}|\ |\ \forall r \in \Pi, \forall tx$ incoming to $r\})$.
\end{center}
\end{definition}

Note that, by definition, the spending number of $\Gamma$ cannot exceed $|\Gamma|$.

\subsection{Problem Statement}

%
Every process $p \in \Pi$ maintains a \emph{local history} $T_p$.
We say that $p$ \emph{accepts} $tx$ when it adds $tx$ to $T_p$.

Ideally, we want local histories of correct processes to eventually converge.  
But this may not always be possible, as our specification allows for multiple spending: correct process may accept conflicting transactions. 
Therefore, we also introduce an accountability mechanism, expressed in the form of accusation histories. 

%
%

%
Formally, an \emph{accusation} is a tuple $(AC,P)$ consisting of a set of processes $AC \subseteq \Pi$ and a \textit{proof of misbehavior} $P$ for every process in $AC$.
$(AC,P)$ can be independently verified by a third party through the function \textit{verify-acc: }$(2^{\Pi} \times \mathcal{P}) \rightarrow \{true,false\}$.
Technically, for each process $p\in AC$, the proof $P$ must contain a set of conflicting transactions $tx_{1},...,tx_{\ell}$ signed by $p$.
We say that the accusation $(AC,P)$ \emph{refers} to $tx_{1},...,tx_{\ell}$.

Every process $p$ is also expected to maintain a local accusation history $A_p$, where each element in $A_{p}$ is an accusation tuple.
The \emph{$k$-spending asset transfer abstraction} receives inputs of the form  \emph{transfer($tx$)} and produces updates to the local histories $T_p$ and $A_p$.

Consider a run of a \emph{$k$-spending asset transfer protocol} ($k$-SAT) in a trust model $(\cQ,\cF)$ with a fixed faulty set $F\in\cF$.
Let $T_{p}(t)$ and $A_{p}(t)$ denote the transaction history and accusation history of process $p$ at time $t$, respectively. %
Let $\Gamma(t)$ denote the collection of local histories of correct processes at time $t$.
Then the run must satisfy:

\begin{description}
    \item[Validity] If a correct process issues a transaction $tx$, then every live correct process $p$ eventually adds $tx$ to $T_p$, or adds an accusation to $A_p$ referring to some transaction on which $tx$ depends.

    \item[$k$-Spending] For all $t\geq 0$, the spending number of $\Gamma(t)$ is bounded by $k$, i.e., $\gamma(\Gamma(t)) \leq k$.
    
    \item[Eventual Conviction] If correct processes $p$ and $q$ add conflicting transactions $tx$ to $T_p$ and $tx'$ to $T_q$ respectively, then they eventually add an accusation referring to $tx$ and an accusation referring to $tx'$ to $A_p$ and $A_q$.
    
    \item[Accuracy] For all $t\geq 0$ and $(AC,P)$ in $A_p(t)$: $\textit{verify-acc}(AC,P)= \textit{true}$. 
    Moreover, $\textit{verify-acc}(AC,P)$ returns $\textit{true}$ if and only if $AC\subseteq F$.  

    \item[Agreement] If a correct process $p$ adds an accusation $(AC,P)$ to $A_p$, then every correct process eventually adds $(AC,P)$ to its accusation history.
    
    \item[Integrity] If $tx.s$ is correct, a correct process $p$ adds $tx$ to $T_{p}$ only if $tx.s$ previously issued $tx$. 
    
    \item[Monotonicity] The accusation history of correct processes grows monotonically, i.e., for all $p$ correct and $t \leq t'$, $A_{p}(t) \subseteq A_{p}(t')$;
    
    \item[Termination] If a correct process $p$ adds a transaction $tx$ to $T_p$, then every live correct process $q$ eventually adds $tx$ to $T_q$ or an accusation referring to $tx$ (or some transaction on which $tx$ depends) to $A_q$.
\end{description}

\subsection{$k$-Spending Asset Transfer Protocol}

The pseudocodes of every process $p_i$ in our \emph{$k$-spending asset transfer protocol} is presented in Algorithms~\ref{alg:kTransfer1} and \ref{alg:kTransfer2}.
%
%
In our protocol, a process accepts a transaction only after hearing from a (local) quorum, and after all of the transaction's inputs are already accepted.
%

%
%

%
\myparagraph{Local Variables.} Variables \textit{echoes}, \textit{usedInp} and \textit{pending} are used in a broadcast stage of the algorithm.
The array \textit{echoes} stores received transactions echoed by other processes.
In \textit{usedInp}, $p_{i}$ stores all transactions it has witnessed to be used as inputs,
while in \textit{pending} it stores transactions that have received echoes from a quorum, but have not yet been added to the history.
The remaining variables are: $p_i$'s transaction history \textit{trHist}, $p_i$'s accusation history \textit{acHist}, and \textit{signedRequests}, an array with sets of tuples $(tx,\sigma)$, where $\sigma$ is a signature for $tx$ from $tx.s$.

The complete algorithm consists of three main blocks: the \emph{broadcast} block, the \emph{acceptance} block and the \emph{accountability} block.
In the following, we give a detailed description on how each block operates.

\myparagraph{Broadcasting transactions.} In order to issue a transaction, $p_{i}$ specifies a transaction $tx$ and invokes the operation \textit{transfer($tx$)} (we assume that transactions issued by correct processes are always valid).
Process $p_{i}$ then creates a signature $\sigma$ for $tx$ and sends them in a message \textit{REQ} to every process in the system.
Upon receiving \textit{REQ} with $tx$, $p_{i}$ stores the signed transaction in \textit{signedRequests}.
If none of $tx$'s inputs are in \textit{usedInp[$tx.s$]}, 
$p_{i}$ echoes the original signed request with the issuer's signature and adds the inputs of the transaction to \textit{usedInp[$tx.s$]}.
A message whose signature does not match its sender is ignored.

Each time $p_{i}$ receives a new \textit{ECHO} from $p_j$ for a transaction $tx$,
it stores the echoed transaction in \textit{echoes[$p_j$]}, afterwards, $p_{i}$ follows the same steps as when receiving a \textit{REQ} message.
When "enough" echoes are collected for the same transaction $tx$, if $tx$ is neither in \textit{pending} nor \textit{trHist}, it is added to \textit{pending}.

\myparagraph{Accepting transactions.} After going through the broadcast phase and adding $tx$ to \textit{pending}, the transaction needs to be verified in order to keep the local history consistent.
This verification is realized through the function \textit{ready($tx$)}, consisting of three conditions:

\begin{enumerate}
    \item All of $tx$'s inputs must have already been added to \textit{trHist};
    \item $tx$ must be valid;
    \item there is no other transaction in \textit{trHist} from $tx.s$ sharing an input with $tx$;
\end{enumerate}

If all three conditions are met, $p_{i}$ adds $tx$ to \textit{trHist} and removes it from $pending$.

\myparagraph{Treating Accusations.} Since $p_{i}$ keeps track of every received signed transaction $(tx,\sigma)$ in \textit{signedRequests} (either coming directly from a \textit{REQ} message or coming from an $ECHO$),
it can construct a proof of misbehavior after receiving signed conflicting transactions. 
The proof here consists of a pair $(tx,\sigma_j)$ and $(tx',\sigma'_j)$ containing distinct transactions from $p_{j}$ whose inputs intersect in a non empty set.
An accusation $(AC,P)$ is created using $p_{j}$'s identifier and the proof.
If it is a new accusation, $p_{i}$ adds $(AC,P)$ to \textit{acHist} and send it to every process in the network in an \textit{ACC} message.
The same steps are followed once a verifiable accusation tuple $(AC,P)$ is received from an \textit{ACC} message.

\begin{algorithm}
\SetAlgoLined
\BlankLine
\textbf{Local Variables:} \\
\textit{echoes $\leftarrow [\emptyset]^{N}$}\Comment*[r]{Array containing sets of received echoes}
\textit{usedInp $\leftarrow [\emptyset]^{N}$}\Comment*[r]{Array of inputs used by each process}
\textit{pending $\leftarrow \{\}$}\Comment*[r]{Set of transactions waiting to be accepted}
\textit{trHist $\leftarrow \{tx_{init}\}$}\Comment*[r]{Transaction history of $p_{i}$} 
\textit{signedRequests $\leftarrow [\emptyset]^{N}$}\Comment*[r]{An array with transactions signed by their issuer}
\textit{acHist $\leftarrow \emptyset$}\Comment*[r]{Accusation history of $p_{i}$}

\BlankLine

{\tiny1} \textbf{operation transfer($tx$)}: \\
    {\tiny2} \ \ \ \ $\sigma \leftarrow$ \textit{sign(self, $tx$)}; \\
    {\tiny3} \ \ \ \ \textit{send message [REQ, $tx$, $\sigma$] to all $p_{k} \in \Pi$}; \\
\BlankLine

{\tiny4} \textbf{upon receiving [\textit{REQ}, $tx$, $\sigma_{j}$] from $p_{j}$:} \\
    {\tiny5} \ \ \ \ \textit{\textbf{if (}verify($p_{j}$, $tx$, $\sigma_{j}$) $\wedge$ $(tx,\sigma_{j})$ $\not\in$ \textit{signedRequests[$tx.s$]}\textbf{):}} \\
    {\tiny6} \ \ \ \ \ \ \ \ \textit{signedRequests[$tx.s$] $\leftarrow$ \textit{signedRequests[$tx.s$]} $\cup$ $\{(tx,\sigma_{j})\}$}\Comment*[r]{stores signed transactions}
    {\tiny7} \ \ \ \ \ \ \ \ \textit{\textbf{if (}$tx.I$ $\cap$ usedInp[$tx.s$] $=$ $\emptyset$\textbf{):}} \\
    {\tiny8} \ \ \ \ \ \ \ \ \ \ \ \ \textit{usedInp[$tx.s$] $\leftarrow$ \textit{usedInp[$tx.s$] $\cup$ $tx.I$}}\Comment*[r]{stores inputs already used by $p_j$}
    {\tiny9} \ \ \ \ \ \ \ \ \ \ \ \ \textit{$\sigma \leftarrow$ sign(self, $tx$)}; \\
    {\tiny10} \ \ \ \ \ \ \ \ \ \ \ \ \textit{send message [ECHO, $(tx,\sigma_{j})$, $\sigma$] to all $p_{k} \in \Pi$}; \\

\BlankLine

{\tiny11} \textbf{upon receiving [\textit{ECHO}, }\textit{$(tx,\sigma_s)$, }\textbf{$\sigma_{j}$] from $p_{j}$:} \\
    {\tiny12} \ \ \ \ \textit{\textbf{if (}verify($p_j$, $tx$, $\sigma_{j}$) $\wedge$ verify($tx.s$, $tx$, $\sigma_s$)\textbf{):}} \\
    {\tiny13} \ \ \ \ \ \ \ \ \textit{echoes[$p_{j}$] $\leftarrow$ \textit{echoes[$p_{j}$]} $\cup$ $\{tx\}$}\Comment*[r]{stores echoed $tx$ and repeat REQ steps}
    {\tiny14} \ \ \ \ \ \ \ \ \textit{\textbf{if (}$(tx,\sigma_{s})$ $\not\in$ signedRequests[$tx.s$]\textbf{):}} \\
    {\tiny15} \ \ \ \ \ \ \ \ \ \ \ \ \textit{signedRequests[$tx.s$] $\leftarrow$ \textit{signedRequests[$tx.s$]} $\cup$ $\{(tx,\sigma_{s})\}$;} \\
    {\tiny16} \ \ \ \ \ \ \ \ \textit{\textbf{if (}$tx.I$ $\cap$ usedInp[$tx.s$] $=$ $\emptyset$\textbf{):}} \\
    {\tiny17} \ \ \ \ \ \ \ \ \ \ \ \ \textit{usedInp[$tx.s$] $\leftarrow$ \textit{usedInp[$tx.s$] $\cup$ $tx.I$}}; \\
    {\tiny18} \ \ \ \ \ \ \ \ \ \ \ \ \textit{$\sigma \leftarrow$ sign(self, $tx$)}; \\
    {\tiny19} \ \ \ \ \ \ \ \ \ \ \ \ \textit{send message [ECHO, $(tx,\sigma_{s})$, $\sigma$] to all $p_{k} \in \Pi$}; \\
\BlankLine

{\tiny20} \textbf{upon receiving echoes for $tx$ from every $q \in Q_{i}, Q_{i} \in \cQ(p_{i})$:} \\
    {\tiny21} \ \ \ \ \textit{\textbf{if (}$tx \not\in$ trHist $\wedge$ $tx \not\in$ pending\textbf{):}}\\
    {\tiny22} \ \ \ \ \ \ \ \ \textit{pending $\leftarrow$ pending $\cup$ $\{tx\}$}; \\
\BlankLine

\caption{$k$-Spending Asset Transfer System: code for process $p_{i}$ part 1}
\label{alg:kTransfer1}
\end{algorithm}

\begin{algorithm}
\SetAlgoLined
\BlankLine

{\tiny23} \textbf{upon existing $tx \in$} \textit{pending} \textbf{such that} \textit{ready($tx$)} \textbf{is true:} \\
    {\tiny24} \ \ \ \ \textit{trHist $\leftarrow$ trHist $\cup$ $\{tx\}$}\Comment*[r]{adds transaction to local history}
    {\tiny25} \ \ \ \ \textit{pending $\leftarrow$ pending$/\{tx\}$}; \\
\BlankLine

{\tiny26} \textbf{upon existing signed} $tx$ \textbf{and} $tx'$ \textbf{in} \textit{signedRequests[$p_{j}$]} \textbf{such that} $tx.I \cap tx'.I \neq \emptyset$  \textbf{:} \\
    {\tiny27} \ \ \ \ $ev1 \leftarrow$ \textit{$(tx,\sigma_{j})$}\Comment*[r]{$p_i$ witnessed conflicting transactions}
    {\tiny28} \ \ \ \ $ev2 \leftarrow$ \textit{$(tx',\sigma'_{j})$}; \\
    {\tiny29} \ \ \ \ \textit{accusation $\leftarrow (\{p_{j}\},\{ev1,ev2\})$}\Comment*[r]{$AC = \{p_j\}$, $P = \{ev1,ev2\}$}
    {\tiny30} \ \ \ \ \textit{\textbf{if (}accusation $\not\in$ acHist\textbf{):}} \\
    {\tiny31} \ \ \ \ \ \ \ \ \textit{acHist $\leftarrow$ acHist $\cup$ $\{$accusation$\}$}\Comment*[r]{adds accusation to local history}
    {\tiny32} \ \ \ \ \ \ \ \ \textit{send message [\textit{ACC}, accusation] to all $p_{k} \in \Pi$}; \\
\BlankLine

{\tiny33} \textbf{upon receiving a message [\textit{ACC}, }\textit{accusation}\textbf{] from $p_{j}$:} \\
    {\tiny34} \ \ \ \ \textit{\textbf{if (}verify-acc(accusation) $\wedge$ accusation $\not\in$ acHist\textbf{):}} \\
    {\tiny35} \ \ \ \ \ \ \ \ \textit{acHist $\leftarrow$ acHist $\cup$ $\{$accusation$\}$}; \\
    {\tiny36} \ \ \ \ \ \ \ \ \textit{send message [\textit{ACC}, accusation] to all $p_{k} \in \Pi$}; \\
\BlankLine

{\tiny37} \textbf{function} \textit{ready($tx$)}\textbf{:} \\
{\tiny38} \ \ \ \ \textit{c1 $\leftarrow$ $\forall tx' \in tx.I: tx' \in$ trHist}\Comment*[r]{\emph{Completeness}}
{\tiny39} \ \ \ \ \textit{c2 $\leftarrow$ TRUE iff $tx$ is valid}\Comment*[r]{\emph{T-Validity}}
{\tiny40} \ \ \ \ \textit{c3 $\leftarrow$ $\forall tx' \in$ trHist $: (tx.s = tx'.s) \Rightarrow (tx.I \cap tx'.I = \emptyset)$}\Comment*[r]{\emph{No-Conflict}}
{\tiny41} \ \ \ \ \textit{return $c1 \wedge c2 \wedge c3$}; \\
\BlankLine

\caption{$k$-Spending Asset Transfer System: code for process $p_{i}$ part 2}
\label{alg:kTransfer2}
\end{algorithm}

\myparagraph{Correctness.}
Consider executions of Algorithms~\ref{alg:kTransfer1} and~\ref{alg:kTransfer2} assuming trust model $(\cQ,\cF)$ with inconsistency number 
$k_{max}$.
Let $F\in\cF$ be the corresponding faulty set.

\begin{lemma}
\label{lm:well_formation}
The history $T_p$ of a correct process $p$ is well-formed.
\end{lemma}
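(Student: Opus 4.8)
The plan is to verify that $T_p$ satisfies all four properties from the definition of a well-formed history: \emph{T-Validity}, \emph{Completeness}, \emph{No-Conflict}, and \emph{Cycle-Freedom}. The main structural observation is that a transaction $tx$ enters $T_p$ (i.e., \textit{trHist}) only through line 24, which fires only when $tx \in$ \textit{pending} and \textit{ready}($tx$) holds, and \textit{ready} is precisely the conjunction $c1 \wedge c2 \wedge c3$ of the three conditions annotated in lines 38--40. So the proof is essentially a walk through the algorithm showing each conjunct maps to the corresponding invariant, plus an induction on the order in which transactions are added to handle the fact that \textit{trHist} grows over time.

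First I would establish \emph{T-Validity}: $tx_{init} \in T_p$ because \textit{trHist} is initialized to $\{tx_{init}\}$ (line preceding line 1), and every other $tx$ added at line 24 passed condition $c2$ at line 39, which asserts $tx$ is valid. Next, \emph{Completeness}: any $tx \neq tx_{init}$ added at line 24 passed $c1$ at line 38, which requires $\forall tx' \in tx.I: tx' \in$ \textit{trHist} \emph{at the moment of acceptance}; since \textit{trHist} only ever grows (transactions are added at lines 24 and the initializer; line 25 only touches \textit{pending}), those inputs remain in $T_p$ forever. \emph{No-Conflict} follows from $c3$ at line 40 together with the same monotonicity argument: when $tx$ is added, no $tx'$ already in \textit{trHist} with $tx'.s = tx.s$ shares an input with $tx$; and conversely, any transaction added \emph{later} than $tx$ itself checked $c3$ against $tx$, so no conflicting pair can coexist. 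This bidirectional argument — checking both that a new transaction does not conflict with old ones and that it cannot later be contradicted — is the one place where one must argue carefully rather than just quote a line number; I would phrase it as: suppose $tx, tx' \in T_p$ conflict, and WLOG $tx'$ was added no earlier than $tx$; then at the time $tx'$ was added, $tx \in$ \textit{trHist}, so $c3$ would have been false, contradiction.

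Finally, \emph{Cycle-Freedom}: I would argue that the dependency graph restricted to $T_p$ is acyclic by the same induction — whenever $tx$ is added, all of $tx.I$ is already present (by $c1$), and each $tx' \in tx.I$ was therefore added strictly earlier; hence ``depends on'' is compatible with the strict temporal order of insertion into \textit{trHist}, so there can be no cycle (in particular $tx \notin tx'.I$ whenever $tx$ depends on $tx'$, since $tx'$ predates $tx$). One subtlety worth a sentence: $tx_{init}$ has empty input set, so it is the base of the induction and never participates in a cycle.

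The main obstacle I anticipate is not any single deep step but rather pinning down the monotonicity of \textit{trHist} and turning the per-acceptance-time checks of \textit{ready} into global invariants of the final (or any intermediate) history $T_p$; in particular the \emph{No-Conflict} direction where a later transaction could in principle conflict with an earlier accepted one requires the explicit ``WLOG order them by insertion time'' move. Everything else is a direct reading of lines 24 and 37--41 against the four bullet points defining well-formedness.
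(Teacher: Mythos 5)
Your proof is correct and follows essentially the same route as the paper's: an induction on the order of insertions into \textit{trHist}, reading the three conjuncts of \textit{ready} against \emph{T-Validity}, \emph{Completeness}, and \emph{No-Conflict}, and deriving \emph{Cycle-Freedom} from the fact that all inputs must already be present before a transaction is added. You are somewhat more explicit than the paper about the monotonicity of \textit{trHist} and the ``order conflicting transactions by insertion time'' step for \emph{No-Conflict}, but this is added detail rather than a different argument.
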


\begin{proof}
The default value for \textit{trHist} is $\{tx_{init}\}$, which is well-formed by definition.
Now assume that at some point \textit{trHist} is well-formed.
Before adding a new transaction $tx$ to \textit{trHist}, $p$ previously uses \textit{ready($tx$)} to check wheter $tx$ is valid and that the resulting history satisfies \textit{No-Conflict} and \textit{Completeness} (lines 37 to 41).
By construction, \textit{trHist} is also \emph{Cycle-Free}: suppose $\{tx\}$ $\cup$ \textit{trHist} creates a cycle, that is, $\exists tx' \in \{tx\}$ $\cup$ \textit{trHist} on which $tx$ depends where $tx \in tx'.I$.
This is clearly not possible: since \textit{trHist} is well-formed, $\forall tx'' \in tx'.I: tx'' \in$ \textit{trHist}, but $tx \not\in \textit{trHist}$, a contradiction.
\end{proof}

\begin{lemma}[\emph{$k$-Spending}]
\label{lm:k_spending}
 At any time $t$, the spending number of $\Gamma(t)$ is bounded by $k_{max}$, i.e., $\gamma(\Gamma(t)) \leq k_{max}$.
\end{lemma}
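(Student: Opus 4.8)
The plan is to argue by contradiction: suppose that at some time $t$ there exists a process $r$ and a transaction $tx$ incoming to $r$ such that $|I_{tx}^{r}| \geq k_{max}+1$ in $\Gamma(t)$. By definition of $I_{tx}^{r}$, this means there are $k_{max}+1$ pairwise-conflicting transactions $tx_1,\dots,tx_{k_{max}+1}$ issued by $r$, all using $tx$ as an input, each of which appears in the local history $T_{p_j}$ of some correct process $p_j$ at time $t$. The first step is to observe that a correct process only adds a transaction $tx_i$ to its history after $tx_i$ passes through the \emph{pending} stage (line 22), which requires having received \textit{ECHO} messages for $tx_i$ from every process in some quorum $Q_j \in \cQ(p_j)$. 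Since the execution complies with $\cF$, each such $Q_j$ must contain at least one correct process that echoed $tx_i$; call it the \emph{witness} of $tx_i$ for $p_j$.

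The key structural step is then to build a graph in $\cG_{\cQ,\cF}$ whose independent set has size $k_{max}+1$, contradicting the definition of the inconsistency number. For each transaction $tx_i$ that was accepted by a correct process, fix one such acceptor $p_i$ and its quorum $Q_i = S(p_i) \in \cQ(p_i)$; complete $S$ to a full quorum map in $\cS$ arbitrarily on the remaining processes. In the graph $G_{F,S}$, I claim the nodes $p_1,\dots,p_{k_{max}+1}$ form an independent set. Indeed, if two of them, say $p_i$ and $p_{i'}$, were adjacent, then $S(p_i) \cap S(p_{i'})$ would contain a correct process $w$. That process $w$ would have echoed both $tx_i$ and $tx_{i'}$ — but a correct process, upon handling the first of these two conflicting transactions, adds its inputs to \textit{usedInp}[$r$] (lines 8 and 17), and thereafter the guard $tx.I \cap \textit{usedInp}[r] = \emptyset$ (lines 7, 16) prevents it from ever echoing the conflicting one. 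Since $tx_i$ and $tx_{i'}$ share the input $tx$, $w$ cannot have echoed both, a contradiction. Hence no edge exists among $p_1,\dots,p_{k_{max}+1}$, so $\mu(G_{F,S}) \geq k_{max}+1 > k_{max} = \lambda(\cG_{\cQ,\cF})$, contradicting the definition of $\lambda$.

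One subtlety to handle carefully: the $p_i$'s must be distinct nodes for the independent-set argument to give size $k_{max}+1$. If a single correct process $p$ had accepted two conflicting transactions $tx_i$ and $tx_{i'}$ both with input $tx$, this would already violate well-formedness of $T_p$ (Lemma~\ref{lm:well_formation}) via \emph{No-Conflict} — indeed condition $c3$ of \textit{ready} explicitly forbids it — so each conflicting transaction is accepted by a distinct correct process, and we can pick the $p_i$'s to be distinct. A second point is that $I_{tx}^{r}$ in the $k$-Spending property ranges over transactions appearing in the \emph{local histories of correct processes} ($\Gamma(t)$), which is exactly what the graph argument captures, so no transaction sitting only in \textit{pending} or only at a Byzantine process contributes. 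The main obstacle is getting the witness/\textit{usedInp} bookkeeping exactly right — namely confirming that a correct process echoes at most one transaction from a given conflicting family, and that every quorum triggering an acceptance does contain a correct echoer because the run complies with $\cF$; once that is pinned down, the reduction to $\lambda(\cG_{\cQ,\cF})$ is immediate.
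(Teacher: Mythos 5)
Your proposal is correct and follows essentially the same route as the paper's proof: reduce an alleged $(k_{max}+1)$-fold spend to an independent set of that size in some $G_{F,S}\in\cG_{\cQ,\cF}$, using the \textit{usedInp} guard (lines 7--8 and 16--17) to show that a correct process in the intersection of two acceptors' quorums cannot have echoed both conflicting transactions. Your explicit handling of the distinctness of the acceptors via \emph{No-Conflict} is a small refinement the paper leaves implicit, but the argument is the same.
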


\begin{proof}
Let $r \in F$ and $tx$ an incoming transaction to $r$.
Suppose $r$ spends $tx$ $k$ times in $\Gamma(t)$, with $k > k_{max}$.
We assume, without loss of generality, that $r$ is the process that multiple spent the maximal number of times in $\Gamma(t)$, which means $\gamma(\Gamma(t))=k$.
We can make the following observations about the algorithm:

\begin{enumerate}
    \item A correct process $p$ adds a transaction $tx'$ to its history only if it received \textit{ECHO} messages for $tx'$ from every process in a quorum $Q \in \cQ(p)$ (guard in line 20).
    \item A correct process $p$ checks if any input of a received transaction is already in \textit{usedInp} before echoing it (lines $7$ and $16$), and if it sends \textit{ECHO} for a transaction, $p$ adds all of its inputs to \textit{usedInp} (lines $8$ and $17$). Therefore, $p$ can send \textit{ECHO} for at most one transaction from $r$ which has $tx$ as an input.
\end{enumerate}

Let $p_i$ and $p_j$ correct accept conflicting $tx_i$ and $tx_j$ from $r$ after receiving echoes from $Q_i \in \cQ(p_i)$ and $Q_j \in \cQ(p_j)$ respectively.
From (2) above, we conclude that $Q_i \cap Q_j \subseteq F$,
otherwise a correct process in the intersection would have echoed two different transactions sharing some input(s) from $r$, which is not allowed by the algorithm (the guards in lines $7$ and $16$ prevent this).

Since $r$ $k$-spends $tx$ in $\Gamma(t)$,
there exists $p_1,...,p_k$ correct that accepted, respectively, conflicting $tx'_1,...,tx'_k$ from $r$ using $tx$ as input.
Now let $Q_1 \in \cQ(p_1),...,Q_k \in \cQ(p_k)$ be the quorums each process received echoes from before adding the conflicting transaction to its history.
Then we can construct a quorum map $S$ satisfying $S(p_i) = Q_i$ for $i= 1,...,k$,
and a graph $G_{F,S} \in \cG_{\cQ,\cF}$ of which $C = \{p_1,...,p_k\}$ is an independent set,
since from (1) and (2) above: $\forall p_i,p_j \in C, i \neq j: S(p_i) \cap S(p_j) \subseteq F$.
However, $k_{max}$ is the inconsistency number of $(\cQ,\cF)$, meaning that there cannot be a graph $G_{F,S} \in \cG_{\cQ,\cF}$ with an independent set of size $k > k_{max}$, a contradiction.
\end{proof}

\begin{lemma}[\emph{Eventual Conviction}]
\label{lm:conviction}
If correct processes $p$ and $q$ add conflicting transactions $tx$ to $T_p$ and $tx'$ to $T_q$ respectively, then they eventually add an accusation referring to $tx$ and an accusation referring to $tx'$ to $A_p$ and $A_q$.
\end{lemma}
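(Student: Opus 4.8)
The plan is to follow the two signed, conflicting requests as they propagate to both $p$ and $q$. First I would unfold the premise ``$p$ adds $tx$ to $T_p$'': this happens at line~24, so $tx$ was previously placed in \textit{pending} at line~22, which requires the guard at line~20 --- i.e.\ $p$ received an \textit{ECHO} for $tx$ from \emph{every} member of some quorum $Q_p\in\cQ(p)$. Since each process lies in all of its own quorums, $p\in Q_p$, so the set of echoes witnessed at line~20 includes $p$'s own echo of $tx$; as the only places a process emits an \textit{ECHO} are lines~10 and~19, $p$ itself executed the echo branch for $tx$. Two consequences follow: (i) $p$ had already stored the issuer's signed request $(tx,\sigma_r)$ in \textit{signedRequests}$[r]$ at line~6 or line~15, where $r=tx.s$ (the signature $\sigma_r$ is genuinely $r$'s, because line~12 verified it before $p$ re-echoed, or line~5 did so when the request came straight from $r$); and (ii) $p$ broadcast $[\textit{ECHO},(tx,\sigma_r),\sigma_p]$ to all processes. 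Applying the same reasoning to $q$ and $tx'$, and recalling that $tx,tx'$ conflict (so $tx'.s=tx.s=r$ and $tx.I\cap tx'.I\neq\emptyset$), $q$ stored $(tx',\sigma'_r)$ and broadcast $[\textit{ECHO},(tx',\sigma'_r),\sigma_q]$ to all.

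Next I would invoke reliable channels. Since $q$ is correct and sent an \textit{ECHO} carrying $(tx',\sigma'_r)$ to everyone, $p$ eventually receives it; both embedded signatures verify ($r$ on $tx'$, $q$ on the message), so $p$ runs lines~14--15 and inserts $(tx',\sigma'_r)$ into \textit{signedRequests}$[r]$. Symmetrically, $q$ eventually inserts $(tx,\sigma_r)$ into its own \textit{signedRequests}$[r]$. Because \textit{signedRequests} only grows, there is a time after which both $p$ and $q$ hold the pair $(tx,\sigma_r),(tx',\sigma'_r)$ in \textit{signedRequests}$[r]$, with intersecting input sets.

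Finally I would fire the handler at line~26 for precisely this pair: its guard becomes (and stays) enabled, so the process forms the accusation $(\{r\},\{(tx,\sigma_r),(tx',\sigma'_r)\})$ at lines~27--29 and, if it is not already in \textit{acHist}, adds it at line~31. By the definition of ``refers'', this accusation refers to both $tx$ and $tx'$; hence $p$ obtains in $A_p$ an accusation referring to $tx$ and $q$ obtains in $A_q$ an accusation referring to $tx'$ (in fact each obtains one referring to both), which is what the statement asks.

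The step I expect to need the most care --- rather than a genuine obstacle --- is the link from ``$p$ accepted $tx$'' to ``$p$ re-broadcast the signed request for $tx$''. It rests on the modeling convention that every process belongs to each of its own quorums, so that the quorum of echoes collected at line~20 necessarily contains $p$'s own echo and thus forces $p$ through the echo branch. A second, purely notational point is the reading of the line~26 ``upon'' clause: it must be understood as triggering for \emph{every} conflicting pair currently in \textit{signedRequests}$[r]$, which is what guarantees that the accusation actually produced is the one naming $tx$ and $tx'$ rather than some other conflicting pair $r$ may have signed. Both points are routine once spelled out.
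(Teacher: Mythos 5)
Your proof is correct and rests on the same mechanism as the paper's: both conflicting signed requests end up in each process's \textit{signedRequests}$[r]$ via the \textit{ECHO} traffic, and the handler at line~26 then manufactures the accusation. The one structural difference is worth noting. The paper does a two-case analysis for each of $p$ and $q$ (``$p$ echoed $tx$'' versus ``$p$ did not echo $tx$, hence echoed some conflicting $tx''$''), and in the first case it has $q$ build the accusation and ship it back to $p$ in an \textit{ACC} message, while in the second case the accusation built by $p$ refers to $tx$ and $tx''$ rather than to $tx$ and $tx'$. You instead kill the second case outright by invoking the model assumption $\forall Q\in\cQ(p): p\in Q$: the guard at line~20 can only fire for a quorum containing $p$ itself, and $p$ receives its own echo only if it sent one, so acceptance forces $p$ through the echo branch. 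This is a legitimate use of the stated assumption and it buys you a cleaner conclusion --- each of $p$ and $q$ locally obtains a single accusation referring to both $tx$ and $tx'$, with no reliance on the \textit{ACC} relay. The paper's version is slightly more robust (it would survive dropping the self-inclusion assumption, and it mirrors the case split reused in the Termination proof), but both arguments are sound and your reading of the line~26 trigger as firing for every conflicting pair is the intended one.
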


\begin{proof}
Let correct processes $p$ and $q$ accept conflicting transactions $tx$ and $tx'$ respectively.
They have previously received echoes for $tx$ (in $p$'s case) and $tx'$ (in $q$'s case), storing the original signed requests in their local \textit{signedRequests} (lines $14$ and $15$).
There are two scenarios to consider for each one of them (we describe it here only for $p$ for simplicity):
$p$ echoed $tx$ before adding it to $T_p$, or $p$ did not echo $tx$.
If $p$ echoed $tx$, then $q$ will eventually receive the echo with a signed request for $tx$ from $p$,
which allows $q$ to construct and relay an accusation for $tx.s$ (in lines $26$ to $32$) using this request together with the one for $tx'$ already stored in $q$'s \textit{signedRequests}
(e.g. assigning the request for $tx$ to \textit{ev1} in line $27$ and the request for $tx'$ to \textit{ev2} in line $28$).
Eventually $p$ will receive an \textit{ACC} message from $q$ containing this accusation and will add it to its accusation history.

Now if $p$ did not echo $tx$, then it must have echoed for another conflicting transaction $tx''$, which means $p$ can construct an accusation using the respective signed requests for $tx$ and $tx''$ as described above.
This accusation is sent to every process in the network and is eventually received by $q$, which adds it to its accusation history.
These scenarios occur in the same way for $q$ and $tx'$.
Ultimately, both $p$ and $q$ end up adding accusations referring to $tx$ and $tx$' to their histories.
\end{proof}

\begin{lemma}[\emph{Termination}]
\label{lm:termination}
If a correct process $p$ adds a transaction $tx$ to $T_p$, then every live correct process $q$ eventually adds $tx$ to $T_q$ or an accusation referring to $tx$ (or some transaction on which $tx$ depends) to $A_q$.
\end{lemma}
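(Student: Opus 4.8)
The plan is to trace what happens to a live correct process $q$ after a correct process $p$ adds $tx$ to $T_p$, proceeding by induction on the dependency depth of $tx$ in $T_p$. First I would observe that, since $p$ added $tx$ to \textit{trHist}, it must have passed the guard in line~20, meaning $p$ received \textit{ECHO} messages for $tx$ from every process in some quorum $Q\in\cQ(p)$; in particular at least one correct process echoed $tx$ (here I use that $Q$ contains $p$ itself, or more generally that not all of $Q$ is faulty in the relevant case — actually the cleanest route is: $p$ is correct and, before accepting $tx$ via \textit{ready}, $p$ itself would have echoed $tx$ or a conflicting transaction, so a correct process echoed \emph{something} signed by $tx.s$ using $tx$'s inputs). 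More directly: because $p$ accepted $tx$, $p$ holds a signed request $(tx,\sigma_s)$ in \textit{signedRequests} and $p$ echoed $tx$ unless it had earlier echoed a conflicting $tx''$.

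The core case split mirrors the one in Lemma~\ref{lm:conviction}. If $p$ echoed $tx$, then since channels are reliable and $q$ is correct, $q$ eventually receives $p$'s \textit{ECHO} for $tx$, hence stores $(tx,\sigma_s)$ in its own \textit{signedRequests} and also echoes $tx$ (or a conflicting transaction) itself. If instead $p$ had echoed a conflicting $tx''$ rather than $tx$, then $p$ already holds signed conflicting requests and by lines~26--32 constructs and broadcasts an accusation referring to $tx$; by reliable channels $q$ receives this \textit{ACC} message, and by \emph{Accuracy} it verifies, so $q$ adds it to $A_q$ — discharging the ``accusation referring to $tx$'' disjunct. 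So the only case that needs real work is: every correct process that has anything to say about $tx$'s input ends up echoing $tx$ itself, and we must show $q$ then actually accepts $tx$ (rather than getting stuck in \textit{pending}).

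For that case I would argue that $q$ eventually collects echoes for $tx$ from a full quorum. The subtlety: $q$ is only guaranteed a \emph{live} quorum $Q_q$ with $Q_q\cap F=\emptyset$; I need every correct process in $Q_q$ to echo $tx$. This follows because each correct member, upon first seeing a signed request for $tx$ (relayed by $p$ or by another correct echoer, using reliable channels and the fact that correct processes forward their echoes to all), either echoes $tx$ or has echoed a conflicting $tx''$ — and in the latter sub-case we are again in the accusation branch, which propagates to $q$. So either $q$ gets a live quorum's worth of echoes for $tx$ and puts $tx$ in \textit{pending}, or an accusation reaches $q$. Finally, once $tx\in\textit{pending}$ at $q$, I must show \textit{ready}$(tx)$ eventually holds: conditions c2 (validity) is static and c3 (no conflicting transaction already accepted by $q$) — if c3 fails because $q$ accepted a conflicting $tx'$, then Lemma~\ref{lm:conviction} gives an accusation referring to $tx$ at $q$, again satisfying the disjunction. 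Condition c1 (all inputs of $tx$ already in $T_q$) is exactly where the induction enters: each $tx'\in tx.I$ was in $T_p$ by \emph{Completeness} of $T_p$ (Lemma~\ref{lm:well_formation}) and has strictly smaller dependency depth, so by the induction hypothesis $q$ eventually adds $tx'$ to $T_q$ or an accusation referring to $tx'$ (hence to something $tx$ depends on) to $A_q$ — the latter again closing the disjunction.

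The main obstacle I anticipate is the liveness bookkeeping in the third paragraph: ensuring that the ``right'' quorum's correct members all echo $tx$, and handling the interaction where some correct process echoes a conflicting transaction instead — one must check that this always triggers an accusation that reaches $q$, so that $q$ is never left waiting forever without either accepting $tx$ or accusing. Organizing the induction on dependency depth cleanly (and noting the dependency DAG is finite and acyclic by well-formedness, so the recursion bottoms out at $tx_{init}$, which every correct process holds initially) is the other point requiring care.
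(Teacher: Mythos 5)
Your proposal is correct and follows essentially the same route as the paper's proof: the same case split on whether $p$ echoed $tx$ or a conflicting transaction, the same argument that $q$'s live quorum either all echo $tx$ (so $tx$ reaches \textit{pending}) or an accusation is constructed and propagated, and the same treatment of dependencies (the paper phrases it via the set \textit{deps}($tx$) and three cases rather than an explicit induction on dependency depth, but the underlying recursion is identical). Your explicit induction and the appeal to Lemma~\ref{lm:conviction} for the failed-c3 case are minor organizational variants of what the paper argues directly.
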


\begin{proof}
Recall that a process is live if it has a quorum composed of only correct processes.

We first show the following: If a correct process adds a transaction $tx$ to its local \textit{pending} set,
then every live correct process eventually does so or adds an accusation referring to $tx$ to its local \textit{acHist}.

Let $p$ be a correct process that adds $tx$ to its \textit{pending} after receiving \textit{ECHO} messages for $tx$ from a quorum.
There are two cases to consider, depending on whether $p$ previously echoed $tx$ or not.

If $p$ did not echo $tx$, then it echoed a conflicting $tx'$ and is able to build an accusation $(AC,P)$ with the original requests for $tx$ and $tx'$ (lines $26$ to $29$).
Then, $p$ adds the accusation to its \textit{acHist} and sends $(AC,P)$ to all processes.
Every correct process $q$ eventually receives the accusation and also adds it to \textit{acHist}.

Suppose now that $p$ echoed $tx$.
If no process sent \textit{ECHO} or \textit{REQ} for a conflicting transaction, then every correct process eventually receives and echoes $tx$.
If a correct process $q$ is live, it will eventually receive enough echoes and add $tx$ to pending.
On the other hand, if a process in $q$'s live quorum had echoed a conflicting transaction, $q$ will receive the conflicting requests, build an accusation $(AC,P)$ referring to $tx$ and $tx'$ and send it to all processes.
Then, as described previously, every correct process eventually adds $(AC,P)$ to \textit{acHist}.

Now suppose $p$ also adds $tx$ to its \textit{trHist}.
We make the following observations about the algorithm:
before being added to \textit{trHist},
any transaction $tx'$ is first added to \textit{pending}  (guard in line $23$).
Also, by Lemma~\ref{lm:well_formation}, every transaction on which $tx'$ depends must have been previously added to \textit{trHist}.
Let \textit{deps($tx$)} include $tx$ and every transaction on which $tx$ depends.
It follows that $p$ previously added every $tx' \in$ \textit{deps($tx$)} to \textit{pending}.
The following three cases are then possible for a live correct process $q$:

\begin{enumerate}
    \item $q$ eventually adds every $tx' \in$ \textit{deps($tx$)} to its \textit{pending}.
    If no transaction in \textit{trHist} conflicts with them, $q$ adds every such $tx'$ to \textit{trHist}.
    
    \item $q$ has already added a transaction to \textit{trHist} that conflicts with some $tx' \in$ \textit{deps($tx$)}.
    In this case, it received conflicting requests. 
    $q$ will then build and send everybody an accusation including the signed requests for the respective transactions.
    \item $q$ never adds one (or more) $tx' \in$ \textit{deps($tx$)} to \textit{pending}, in which case, as previously shown, $q$ eventually adds an accusation referring to $tx'$ to \textit{acHist}.
\end{enumerate}

Therefore, if a correct process $p$ adds a transaction $tx$ to its \textit{trHist} and a live correct process $q$ is never able to do so, then $q$ eventually adds an accusation to its \textit{acHist} referring to $tx$ or some transaction on which $tx$ depends.
\end{proof}

\begin{lemma}[\emph{Validity}]
\label{lm:validity}
If a correct process issues a transaction $tx$, then every live correct process $p$ eventually adds $tx$ to $T_p$, or adds an accusation to $A_p$ referring to some transaction on which $tx$ depends.
\end{lemma}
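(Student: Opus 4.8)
The plan is to prove \emph{Validity} as a corollary of \emph{Termination} (Lemma~\ref{lm:termination}), which has just been established. The key observation is that if a correct process issues a transaction $tx$ via \textit{transfer($tx$)}, then it eventually adds $tx$ to its own transaction history. Once that is shown, Termination immediately gives the conclusion: since some correct process (the issuer) adds $tx$ to its \textit{trHist}, every live correct process $q$ eventually adds $tx$ to $T_q$ or adds an accusation referring to $tx$ or to some transaction on which $tx$ depends. Note that ``referring to $tx$'' also counts as ``referring to some transaction on which $tx$ depends'' in the trivial sense (or we simply restate the disjunction), so the statement of Validity follows.

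The main work, then, is the first step: showing that a correct issuer $p$ of a valid transaction $tx$ does add $tx$ to its own \textit{trHist}. First I would trace the algorithm: $p$ invokes \textit{transfer($tx$)}, signs $tx$, and sends \textit{REQ} to all processes (lines 1--3). Upon receiving its own \textit{REQ}, assuming no conflicting transaction's inputs are already in \textit{usedInp[$tx.s$]} — which holds because $p$ is correct and never issues conflicting transactions, and no other process can forge a signed transaction from $p$ — $p$ echoes $tx$ (lines 4--10). Every correct process that receives this \textit{ECHO} with a valid issuer signature does likewise (lines 11--19), again because $p$ never issues a conflicting transaction, so no correct process has a conflicting input recorded for $p$. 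Hence every correct process eventually echoes $tx$. Since $p$ is correct it has at least one live quorum — wait, here I must be careful: $p$ being correct does not automatically mean $p$ is live. But the statement only requires the conclusion for \emph{live} correct processes, and the issuer's own progress must still be argued. Actually, to invoke Termination I need \emph{some} correct process to have added $tx$; if the issuer itself is live this is clean. For a non-live issuer, I would instead argue inductively on \textit{deps($tx$)}: the inputs of $tx$ were themselves accepted by $p$ earlier (a correct issuer only issues $tx$ whose inputs it already holds, making $tx$ valid and ready), so by induction each live correct process handled them, and then the echo-propagation argument plus the \textit{ready} check lets any live correct process accept $tx$.

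The cleanest route is: (i) observe that a correct issuer, before issuing $tx$, has $tx.I \subseteq T_p$ and $tx$ valid, so $tx$ satisfies all three \textit{ready} conditions at $p$ as soon as $tx$ reaches \textit{pending}; (ii) observe $tx$ reaches $p$'s \textit{pending} because all correct processes echo $tx$ (no conflict exists, by correctness of the issuer and unforgeability of signatures) and $p$ collects echoes from one of its quorums — here invoking that the argument of Lemma~\ref{lm:termination} already shows any live correct process reaches this state; (iii) conclude $p$ (if live) adds $tx$ to \textit{trHist}, and then invoke Lemma~\ref{lm:termination} directly. The subtle point — the main obstacle — is handling the case where the issuer is \emph{not} live: then I cannot claim the issuer itself accepts $tx$, so I must feed Termination a different ``seed.'' The resolution is to apply the intermediate claim proved inside Lemma~\ref{lm:termination} — ``if a correct process adds $tx$ to \textit{pending}, every live correct process eventually adds $tx$ to \textit{pending} or an accusation referring to $tx$'' — starting from the fact that the issuer (correct, hence it does echo and does receive echoes from, e.g., itself and other correct processes) adds $tx$ to \textit{pending}; combined with the inductive treatment of \textit{deps($tx$)} exactly as in Termination's proof, every live correct process then adds $tx$ to \textit{trHist} or the required accusation. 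I expect the write-up to be short, essentially a reduction to Lemma~\ref{lm:termination} plus the one-line remark that a correct issuer's transaction is valid and has its inputs already in the issuer's history.
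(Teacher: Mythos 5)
Your overall strategy --- reduce Validity to Termination plus the observation that a correct issuer never produces conflicting requests, so every correct process echoes $tx$ --- is essentially the paper's, and you correctly identify the one subtle point (a correct issuer need not be live). But your resolution of that point has a genuine flaw: you propose to seed the intermediate claim of Lemma~\ref{lm:termination} with ``the issuer adds $tx$ to \textit{pending}.'' Adding to \textit{pending} requires echoes from \emph{every} member of some quorum in $\cQ(p)$ (line 20); a non-live issuer has no all-correct quorum, so faulty quorum members may simply never echo and the issuer may never place $tx$ in \textit{pending}. That seed is therefore exactly as unjustified as the one you rejected (the issuer adding $tx$ to \textit{trHist}).

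The fix --- and what the paper does --- is that no seed is needed for $tx$ itself: since the correct issuer sends \textit{REQ} to all and never sends a conflicting request (and its signature cannot be forged), every correct process passes the guard in lines 7/16 and echoes $tx$; hence every \emph{live} correct process $q$ directly collects echoes from an all-correct quorum and puts $tx$ in its \textit{pending}, and no accusation referring to $tx$ itself can ever be produced. Lemma~\ref{lm:termination} is then invoked not on $tx$ but on the \emph{dependencies} of $tx$, which the correct issuer provably holds in its own \textit{trHist} regardless of liveness (a correct process only spends inputs it has already accepted). Either $q$ eventually adds all of \textit{deps($tx$)} to $T_q$ --- in which case $tx$, already in \textit{pending}, valid and conflict-free, passes \textit{ready} and is accepted --- or $q$ adds an accusation referring to some transaction on which $tx$ depends. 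Your write-up already contains all of these ingredients; only the final assembly needs to route around the issuer's \textit{pending} set.
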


\begin{proof}
If a correct process $p$ sends a \textit{REQ} message for $tx$, eventually every correct process echoes $tx$ and every live correct process adds $tx$ to its \textit{pending}.
Since $p$ is correct, it will not send conflicting requests, thus no accusation referring to $tx$ can be produced.
Also, $p$ must have previously added every transaction on which $tx$ depends to \textit{trHist},
which from Lemma~\ref{lm:termination}, if a live correct process $q$ does not add said transactions to its \textit{trHist} (and consequently $tx$, since it is in \textit{pending}),
then $q$ eventually adds an accusation to \textit{acHist} referring to some transaction on which $tx$ depends.
\end{proof}

\begin{theorem}
\label{th:k_spending}
Consider the trust model $(\cQ,\cF)$ with inconsistency number $k_{max}$. 
Then Algorithms~\ref{alg:kTransfer1} and~\ref{alg:kTransfer2} implement $k_{max}$-spending asset transfer abstraction.
\end{theorem}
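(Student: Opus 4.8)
The plan is to verify, one by one, the eight properties that constitute the $k_{max}$-SAT specification, reusing the lemmas already proved and filling in the few properties not yet addressed. Four properties are exactly Lemmas~\ref{lm:validity}, \ref{lm:k_spending}, \ref{lm:conviction}, and~\ref{lm:termination}: \emph{Validity}, \emph{$k$-Spending}, \emph{Eventual Conviction}, and \emph{Termination}. Since the model's inconsistency number is $k_{max}$, Lemma~\ref{lm:k_spending} delivers precisely the bound $\gamma(\Gamma(t))\le k_{max}$ required for $k_{max}$-Spending. It therefore remains to establish \emph{Monotonicity}, \emph{Integrity}, \emph{Agreement}, and \emph{Accuracy}, each of which I expect to follow from a direct inspection of Algorithms~\ref{alg:kTransfer1} and~\ref{alg:kTransfer2}, together with the reliable-channel and unforgeable-signature assumptions from Section~\ref{sec:model}.

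\emph{Monotonicity} is immediate: the only updates to \textit{acHist} are set unions (lines~31 and~35), so $A_p(t)\subseteq A_p(t')$ for $t\le t'$. \emph{Integrity} uses unforgeability: a correct $p$ adds $tx$ to \textit{trHist} only after $tx$ reached \textit{pending}, which requires \textit{ECHO} messages from a quorum, and a correct process echoes $tx$ only after a successful \textit{verify}$(tx.s, tx, \cdot)$ check (lines~5 and~12); a correct $tx.s$ produces its signature on $tx$ only inside \textit{transfer($tx$)} (line~2), so $tx.s$ must have issued $tx$ beforehand. \emph{Agreement} is a standard reliable-broadcast argument applied to \textit{ACC} messages: when a correct process first inserts $(AC,P)$ into \textit{acHist} it sends \textit{ACC} to every process (line~32 or~36); each correct recipient that does not already hold $(AC,P)$ inserts it and re-sends it (lines~33--36); by the reliable-channel property every correct process eventually receives and stores $(AC,P)$.

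For \emph{Accuracy} I would argue both directions of the stated equivalence. For the first part, note that the accusation formed in lines~27--29 is built from two signed requests $(tx,\sigma_j)$, $(tx',\sigma'_j)$ in \textit{signedRequests}$[p_j]$ with $tx.I\cap tx'.I\neq\emptyset$, i.e., a genuine proof of conflicting signatures of $p_j$; and the \textit{ACC}-handler only stores $(AC,P)$ after an explicit \textit{verify-acc} check (line~34). Hence every tuple in $A_p(t)$ passes \textit{verify-acc}. For the equivalence \textit{verify-acc}$(AC,P)=\textit{true}\iff AC\subseteq F$: whenever the check succeeds, $P$ contains conflicting transactions signed by each $p\in AC$, and by unforgeability a correct process never signs conflicting transactions, so $AC\subseteq F$; conversely, the proof the algorithm constructs against any process caught multiply spending (hence faulty) is always accepted. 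The main point requiring care is making this last equivalence precise — formalizing that a valid proof of conflicting signatures can exist only for Byzantine issuers, so the set of ``provably faulty'' processes coincides with a subset of $F$ — but this is a direct consequence of the signature assumption and of the fact that conflicting signed requests are produced only by Byzantine issuers; everything else is bookkeeping over the lemmas. Assembling all eight verified properties yields the theorem.
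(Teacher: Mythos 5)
Your proposal matches the paper's own proof essentially verbatim: the paper likewise invokes Lemmas~\ref{lm:k_spending}--\ref{lm:validity} for \emph{$k$-Spending}, \emph{Eventual Conviction}, \emph{Termination}, and \emph{Validity} (plus Lemma~\ref{lm:well_formation} for well-formedness), and then dispatches \emph{Accuracy}, \emph{Monotonicity}, \emph{Agreement}, and \emph{Integrity} as immediate consequences of the signature assumption, the set-union updates to \textit{acHist}, and the relay of \textit{ACC} messages over reliable channels. Your treatment is slightly more detailed on the \emph{Accuracy} equivalence, but the decomposition and all key arguments are the same.
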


\begin{proof}
Lemma~\ref{lm:well_formation} shows that correct processes always maintains well-formed local transaction histories.
The \emph{$k_{max}$-Spending}, \emph{Eventual Conviction}, \emph{Termination} and \emph{Validity} properties are shown in Lemmas~\ref{lm:k_spending} to~\ref{lm:validity}.

\emph{Accuracy}, \emph{Monotonicity}, \emph{Agreement} and \emph{Integrity} are immediate.
A correct process adds an accusation $(AC,P)$ to its \textit{acHist} only if it can verify that messages for conflicting transactions in $P$ were indeed signed by the process in $AC$ (\emph{Accuracy}).
The set \textit{acHist} may only grow with time (\emph{Monotonicity}).
Moreover, once a correct process adds an accusation to its \textit{acHist}, it then sends the accusation to every other process in an \textit{ACC} message.
This message is eventually received by every correct process, which verifies and adds the accusation to its \textit{acHist} (\emph{Agreement}).
Finally, the signature of a correct process for a transaction request cannot be forged (\emph{Integrity}).

\end{proof}

\section{Relaxed Broadcast Abstraction and Lower Bounds}
\label{sec:lower_bounds}

In this section, we show that the inconsistency number of $(\cQ,\cF)$ is optimal for $k$-SAT, by relating the problem to the fundamental \emph{broadcast abstraction}.
%
%
%
The abstraction exports one operation \textit{broadcast($m$)} and enables a callback \textit{deliver($m$)}, for $m$ in a value set $\mathcal{M}$. %
We assume that each broadcast instance has a dedicated \emph{source}, i.e., the process invoking the broadcast operation.
The following abstraction and lower bounds for the relaxed broadcast  were first introduced in a previous work~\cite{bezerra2021relaxed}, we also present them here for completeness.

%
We now describe \emph{$k$-Consistent Broadcast} ($k$-CB).
Given a trust model $(\cQ,\cF)$, in every execution with a fixed $F \in \cF$, a $k$-Consistent Broadcast protocol ensures the following properties:

\begin{itemize}
    \item (Validity) If the source is correct and broadcasts $m$, then every \emph{live} correct process eventually delivers $m$.
    \item ($k$-Consistency) Let $M$ be the set of values delivered by the correct processes, then $|M| \leq k$.
    \item (Integrity) A correct process delivers at most one value and, if the source $p$ is correct, only if $p$ previously broadcast it.
\end{itemize}

This protocol is a generalized version of an abstraction known as \emph{Consistent Broadcast}~\cite{cachin2011introduction}.
\emph{Validity} in Consistent Broadcast guarantees that a broadcast value is delivered by every correct process.
Also, correct processes cannot deliver different values.
Note that if every correct process is live and $k=1$, then $k$-CB implements Consistent Broadcast.

\subsection{Lower bound for $k$-Consistent Broadcast}

We restrict our attention to \emph{quorum-based protocols},
initially introduced in the context of consensus algorithms~\cite{losa2019stellar}.
Intuitively, in a quorum-based protocol, a process $p$ should be able to make progress if the members in one of its quorums $Q \in \cQ$ appear correct to $p$.
This should hold even if the actual set of correct processes in this execution is different from $Q$. 
Formally, we make the following assumption about algorithms implementing $k$-CB:

\begin{itemize}
    \item (Local Progress) For all $p \in \Pi$ and $Q \in \cQ(p)$, 
    there is an execution in which only the source and processes in $Q$ take steps, $p$ is correct, and $p$ delivers a value.
\end{itemize}

Consider a trust model $(\cQ,\cF)$ and
its graph representation (Section~\ref{sec:dc_trust}). Let $k_{max}$ be its inconsistency number.
%
%

\begin{theorem}
\label{th:allbound}
%
%
No algorithm can implement $k$-CB in a trust model $(\cQ,\cF)$ such that $k < k_{max}$.
\end{theorem}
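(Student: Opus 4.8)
The plan is to argue by contradiction: assume some quorum-based algorithm $\mathcal{A}$ implements $k$-CB with $k < k_{max}$, and exhibit an execution in which correct processes deliver $k_{max} > k$ distinct values, violating $k$-Consistency. The witness for $k_{max}$ is, by definition of the inconsistency number, a faulty set $F \in \cF$ and a quorum map $S \in \cS$ such that $G_{F,S}$ has an independent set $C = \{p_1,\dots,p_{k_{max}}\}$ of correct processes (those in $\Pi_F = \Pi - F$), where for all $i \neq j$ we have $S(p_i) \cap S(p_j) \subseteq F$. I would fix this $F$, $S$, and $C$ at the outset.

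First I would use \emph{Local Progress} applied to each $p_i \in C$ with its quorum $Q_i = S(p_i) \in \cQ(p_i)$: there is an execution $E_i$ in which only the source and the processes of $Q_i$ take steps, $p_i$ is correct, and $p_i$ delivers some value $v_i$. The key move is that the Byzantine source, together with the processes that are allowed to be faulty, can make each $p_i$ see a run \emph{locally indistinguishable} from $E_i$ in which $p_i$ delivers a \emph{distinct} value $v_i$. Concretely: I would have the source behave towards the processes ``assigned'' to $p_i$ exactly as it does in $E_i$, using a fresh value $v_i$ for each $i$. The processes in $Q_i \cap F$ are Byzantine and play along with the $E_i$-script towards $p_i$. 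The processes in $Q_i - F$ are correct, but — and this is where the independence of $C$ is essential — each correct process $q$ lies in the quorum $Q_i$ of at most one $p_i \in C$ (if $q \in Q_i \cap Q_j$ with $i \neq j$ and $q$ correct, then $q \in S(p_i) \cap S(p_j) \setminus F$, contradicting $(p_i,p_j) \notin E_{F,S}$). So each correct process outside $F$ is needed by at most one $E_i$-simulation, and we can consistently schedule it to run its $E_i$-role. Messages crossing between the different simulations can be delayed indefinitely, since we impose no synchrony. Thus we build one global execution, compliant with $\cF$ (the faulty set is $\subseteq F \cup \{\text{source}\}$; I should check the source is permitted to be faulty, or fold it into $F$ appropriately, or treat the ``source'' as an external actor allowed to be Byzantine), in which each correct $p_i$ delivers $v_i$, and by \emph{Integrity} each delivers only that one value, so the $v_i$ are $k_{max}$ distinct delivered values.

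The main obstacle — and the step I would spend the most care on — is making the indistinguishability argument fully rigorous: showing that the spliced global execution restricted to the processes taking steps in simulation $i$ is a valid execution of $\mathcal{A}$ that is step-by-step indistinguishable to $p_i$ from $E_i$, so that $p_i$ is forced to deliver $v_i$. This requires: (i) that the correct processes shared between simulation $i$ and the ``rest of the world'' can have their incoming messages from other simulations delayed past the point where $p_i$ delivers (asynchrony plus reliable-but-untimed channels gives this); (ii) that a single correct process never has to play two conflicting $E_i$-roles, which is exactly the independence property of $C$; and (iii) that digital signatures don't leak the inconsistency — but since each $E_i$ only involves the source and $Q_i$, and the source is (allowed to be) Byzantine, it can sign the distinct $v_i$'s itself, so no forgery is needed. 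Once indistinguishability is established, \emph{Validity} is not even required for the contradiction; $k$-Consistency alone is violated, since $|M| \geq k_{max} > k$. I would close by noting this also yields the lower bound for $k$-SAT via the reduction sketched at the start of Section~\ref{sec:lower_bounds}.
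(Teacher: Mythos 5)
Your proposal is correct and follows essentially the same route as the paper's proof: fix the witness $F$ and $S$ achieving $k_{max}$, invoke \emph{Local Progress} once per member of the independent set to obtain executions $\Sigma_i$ in which $p_i$ delivers a value, and splice them into one $\cF$-compliant asynchronous execution where the Byzantine source (taken in $F$) feeds each $p_i$ a distinct value, exploiting that $S(p_i)\cap S(p_j)\subseteq F$ so no correct process must play two conflicting roles. Your added care about indistinguishability, signatures, and the fact that each correct process belongs to at most one quorum of the independent set makes explicit what the paper leaves implicit, but it is the same argument.
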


\begin{proof}
Let $G_{F,S}$ be the graph generated over fixed $F \in \cF$ and $S \in \cS$ and $C = \{p_1,...,p_k\}$ an independent set in $G_{F,S}$ of size $k$.
We proceed to show that there exists an execution where $k$ different values are delivered by processes in $C$.
Let $r \in F$ be the source, by the definition of \emph{Local Progress}, it exists an execution $\Sigma_i$ for each $p_{i}$ where $part(\Sigma_i)= \{r\} \cup S(p_i)$, in which $p_i$ delivers a value $m_i$.
In this case, $r$ and other faulty processes in $S(p_i)$ appear correct within $\Sigma_i$.
Since $\forall p_i,p_j \in C: S(p_i) \cap S(p_j) \subseteq F$, we can build $\Sigma$ such that all executions $\Sigma_i$ are subsequences of $\Sigma$, in which no correct process receives any information of conflicting values before $p_{1},...,p_{k}$ deliver $m_{1},...,m_{k}$, respectively.

Now let $G_{F,S}' \in \cG_{\cQ,\cF}$ be a graph whose independence number is $k_{max}$,
there exists an independent set $C_{max}$ of size $k_{max}$ in $G_{F,S}'$.
As shown above, it is always possible to build an execution where $k_{max}$ processes deliver $k_{max}$ distinct values before any correct process is able to identify the misbehavior.
\end{proof}

Intuitively, if two correct processes have quorums that do not have a correct process in the intersection,
they might deliver distinct values before noticing any misbehavior in the execution.
Within an independent set, the quorums of every pair of nodes do not intersect in a correct process,
and $k_{max}$ represents the highest possible independent set in $\cG_{\cQ,\cF}$,
thus establishing the lower bound for $k$-CB.

\subsection{Relating $k$-Spending Asset Transfer and $k$-Consistent Broadcast}

We show now that having a protocol implementing $k$-SAT, one implement $k$-CB, which implies that the lower bound established in Theorem~\ref{th:allbound} also holds for $k$-SAT.

\begin{theorem}
\label{th:red}
$k$-SAT can be used to implement $k$-CB. 
\end{theorem}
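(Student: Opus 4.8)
# Proof Proposal for Theorem~\ref{th:red}

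\textbf{Approach.} The plan is to give a black-box reduction: assuming a $k$-SAT protocol running on the trust model $(\cQ,\cF)$, I construct a $k$-CB protocol for a given broadcast instance with a designated source. The key idea is to identify the "delivery of a value $m$" in $k$-CB with the "acceptance of a transaction that encodes $m$" in $k$-SAT. Concretely, I would have the source of the $k$-CB instance own a dedicated account, pre-funded by $tx_{init}$, and encode the broadcast message $m$ inside an outgoing transaction $tx_m$ that spends the source's initial coin (all outgoing transactions of the source in a given instance share the same single input, so any two of them conflict). A correct process performing \textit{deliver} in the $k$-CB simulation will be triggered exactly when the simulated $k$-SAT instance adds such a transaction $tx_m$ to its local history $T_p$, and it extracts $m$ from $tx_m$.

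\textbf{Key steps, in order.} First, I set up the simulation: fix the broadcast source $r$, reserve an account for $r$ with a unit of stake in $tx_{init}$, and define the encoding $m \mapsto tx_m$ so that distinct messages map to distinct, mutually conflicting transactions all using $r$'s initial coin as their unique input. Second, \textit{broadcast($m$)} at the source is implemented by invoking \textit{transfer($tx_m$)} in the $k$-SAT protocol; \textit{deliver} at a correct process $p$ fires the first time $p$ adds some $tx_m$ (an outgoing transaction of $r$ spending the designated coin) to $T_p$, returning the encoded $m$. Third, I verify the three $k$-CB properties. \emph{Validity}: if the source is correct it issues a single valid $tx_m$, so by $k$-SAT's \emph{Validity} every live correct process eventually adds $tx_m$ to its history (it cannot instead add an accusation, since the only transaction $tx_m$ depends on is $tx_{init}$, and a correct source produces no conflicting transactions, so no valid proof of misbehavior referring to $tx_{init}$ or $tx_m$ can exist by \emph{Accuracy}) — hence it delivers $m$. \emph{Integrity}: by the $k$-SAT \emph{ready} check (condition c3 / \emph{No-Conflict}) a correct process never adds two conflicting transactions of $r$ to its history, so it delivers at most one value; and by $k$-SAT \emph{Integrity}, if $r$ is correct then any $tx_m$ added by a correct process was previously issued by $r$, i.e., $r$ broadcast $m$. \emph{$k$-Consistency}: the set $M$ of delivered values corresponds to the set of outgoing transactions of $r$ spending the designated input that appear in the histories of correct processes; this is exactly $I^{r}_{tx_{init}}$ over the collection $\Gamma$ of correct local histories, whose size is $\gamma(\Gamma) \le k$ by the \emph{$k$-Spending} property. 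Therefore $|M| \le k$.

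\textbf{Main obstacle.} The delicate point is \emph{Validity}, specifically ruling out the "escape hatch" in the $k$-SAT \emph{Validity} clause whereby a live correct process may, instead of adding $tx_m$, add an accusation referring to a transaction on which $tx_m$ depends. Since $tx_m$ depends only on $tx_{init}$, I must argue that no accusation referring to $tx_{init}$ is ever produced: $tx_{init}$ has a fixed issuer $\perp$ (the system originator) and is never "issued" as a conflicting transaction, so by \emph{Accuracy} (\textit{verify-acc} returns true only for subsets of $F$, and requires conflicting signed transactions) no such accusation can pass verification. A secondary subtlety is making the encoding $m \mapsto tx_m$ injective while keeping every $tx_m$ valid (\textit{outValue} $=$ \textit{inValue} $> 0$) — this is routine: use distinct output maps of the same total value, e.g. routing the unit of stake to different recipients or back to $r$ in message-dependent fixed denominations, and if $\mathcal{M}$ is large, partition the unit into enough pieces or assume a suitable injection into the (countably infinite) transaction space $\cT$. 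I would state these encoding details briefly and move on, as they carry no real difficulty.
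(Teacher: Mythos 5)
Your reduction is essentially the paper's own proof: encode the message in a transaction spending the source's $tx_{init}$ coin, map \textit{broadcast} to \textit{transfer}, map acceptance into $T_p$ to \textit{deliver}, and derive Validity, Integrity, and $k$-Consistency from the corresponding $k$-SAT properties (the paper likewise uses \emph{$k$-Spending} applied to the conflicting transactions with input $tx_{init}$). Your explicit argument that the accusation ``escape hatch'' in $k$-SAT Validity cannot fire for a correct source is a detail the paper leaves implicit, but it does not change the route.
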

\begin{proof}
Suppose that we have a protocol implementing $k$-SAT.
We show how one can slightly modify the $k$-SAT protocol to implement $k$-CB with a message set $\cM$.
First, we amend the notion of a transaction by allowing the source to attach a message $m\in \cM$ to it.
Also we let $tx_{init}$ assign some funds to the source $p$ and let $\tau$ be some matching output map.

Therefore, to broadcast a message $m$, $p$ issues the transaction $tx=(p,\tau,\{tx_{init}\},m)$.
Whenever a correct process $q$ adds $tx=(p,\tau,\{tx_{init}\},m')$ to $T_q$, it issues  \textit{deliver($m$)}.
If $p$ is correct, every live correct process eventually delivers it, that is $k$-SAT \emph{Validity} implies $k$-CB \emph{Validity}.
The \emph{$k$-Spending} property implies that up to $k$ conflicting transactions issued by $p$  with $tx_{init}$ as input can be accepted by correct processes.
Thus, at most $k$ distinct messages might be "delivered", which implies \emph{$k$-Consistency}.
Trivially, $k$-SAT \emph{Integrity} implies $k$-CB \emph{Integrity}.
\end{proof}

Theorems~\ref{th:allbound} and~\ref{th:red} imply that
Algorithms~\ref{alg:kTransfer1} and~\ref{alg:kTransfer2} implements the $k$-SAT abstraction with optimal $k$.

\section{Achieving Local Consistency}
\label{sec:clusters}

In Section~\ref{sec:transfer_system}, we show how to limit the actions of a malicious user given the trust model $(\cQ,\cF)$: the number of times an input can be spent is bounded by the inconsistency number of $(\cQ,\cF)$.
However, one may also expect the trust assumptions to yield groups of participants with good connections, or \emph{clusters}.
Intuitively, processes within a cluster maintain a mutually consistent view of the system state, that is, their histories do not contain conflicting transactions.
In the following, we formally define the notion of clusters.

We expand the transaction tuple to add a \emph{timestamp} indicator $tm$: $tx = (s,\tau,I,tm)$.
A transaction $tx$ \emph{precedes} $tx'$ \textit{iff} $(tx.s = tx'.s) \wedge (tx.tm = tx'.tm-1)$.
From now on, we also require each well formed history $T$ to include a predecessor for every transaction in it\footnote{Except for transactions with timestamp $1$. Moreover, correct processes are expected to issue a single transaction per timestamp.}.
Let $T.r$ be the projection of a history $T$ to transactions issued by process $r$, i.e., $tx \in T.r \Leftrightarrow (tx \in T \wedge tx.s = r)$.

\myparagraph{Clusters.} A subset of histories $C \subseteq \Gamma$ is a \emph{cluster} in $\Gamma$ if it satisfies:
\[ \forall r \in \Pi, \forall T_i, T_j \in C: (T_i.r \subseteq T_j.r) \vee (T_j.r \subseteq T_i.r) \]

Now let $\mathbb{C} = [C_1,...,C_k]$ be an array of non-empty clusters such that $\bigcup \mathbb{C} = \Gamma$.
We say that $\mathbb{C}$ is \emph{minimum} \textit{iff} for every such array of clusters $\mathbb{C}'$: $|\mathbb{C}| \leq |\mathbb{C'}|$,
that is, $\mathbb{C}$ covers the entirety of $\Gamma$ with the least amount of clusters.
We then call $|\mathbb{C}|$ the \emph{cover number} of $\Gamma$.
Note that this number is at least $\gamma(\Gamma)$, since no pair of transaction histories with conflicting transactions can belong to the same cluster.

The following question remains: in a run of a relaxed asset transfer system, what is the bound on the number of clusters necessary to cover the histories of correct processes?

Consider a run of $k$-spending asset transfer with fixed $F \in \cF$, and let $\Gamma(t)$ be the collection of histories of correct processes.
Formally, we want to find the minimum possible $\kappa$ such that:

\begin{description}
    \item[$\kappa$-Views] For all $t\geq 0$, the cover number of $\Gamma(t)$ is bounded by $\kappa$.
\end{description}

By definition, histories within a cluster $C$ do not contain conflicting transactions,
therefore $\bigcup_{C}$ results in a well-formed history.
Intuitively, the system works as if processes whose history belongs to $C$ are following the same consistent \emph{view} of the system state.

We expect $\kappa$ to be higher than the spending number of $\Gamma$.
For instance, in a run of Algorithms~\ref{alg:kTransfer1} and \ref{alg:kTransfer2}, distinct malicious issuers might exploit vulnerabilities of different trust graphs in order to make the histories of more than $k_{max}$ correct processes conflict.
In addition, for the lower bound in Section~\ref{sec:lower_bounds}, we relate asset transfer to a single-instance broadcast in which only one source broadcasts a value.
In order to know in how many views $\Gamma$ can be partitioned, it is required a deeper consistency analysis of a long-lived abstraction (with multiple potential sources).
Therefore, the question of what is the smallest $\kappa$ that can be implemented remains open.

\section{Related Work}
\label{sec:related_work}
Damg{\aa}rd et al.~\cite{damgaard2007secure} appear to be the first to consider the decentralized trust setting. 
They introduced the notion of \textit{aggregate adversary structure}~$\mathcal{A}$: each node is assigned with a collection of subsets of nodes that the adversary might corrupt at once. 
In this model, assuming \emph{synchronous} communication, 
they discuss solutions for broadcast, verifiable secret sharing and multiparty computation.

Ripple~\cite{schwartz2014ripple} and Stellar~\cite{mazieres2015stellar}, conceived as \emph{open} payment systems, use decentralized trust as an alternative to \emph{proof-of-work}-based protocols~\cite{nakamoto2008bitcoin,ethereum}.
In the Ripple protocol, each participant expresses its trust assumptions in the form of a \textit{unique node list} (UNL), a subset of system members. 
To accept a transaction, a node needs acknowledgement from a set of at least $80\%$ of its UNL (which can be seen as a quorum). 
%
%
The Ripple white paper~\cite{schwartz2014ripple} assumes that up to $20\%$ of members in an UNL are Byzantine, stating that an overlap of at least $20\%$ between every pair of UNLs in enough to prevent \emph{forks}. 
Later analyses suggest this overlap to be more than $40\%$~\cite{armknecht2015ripple} without Byzantine faults, and more than $90\%$ with the same original assumptions~\cite{chase2018analysis} (up to $20\%$ Byzantine members in an UNL). 
Chase and MacBrough~\cite{chase2018analysis} also provide an example in which liveness of the protocol is violated even with $99\%$ of overlap.

Stellar consensus protocol~\cite{mazieres2015stellar} uses a \emph{Federated Byzantine Quorum System} (FBQS). 
A quorum $Q$ in the FQBS is a set that includes a \emph{quorum slice} (a trusted subset of members) for every member in $Q$. 
Correctness of Stellar depends on the individual trust assumptions, and stronger properties are guaranteed for nodes trusting the "right guys", which are in so called \textit{intact sets}.
Garc\'ia-P\'erez and Gotsman~\cite{garcia2018federated} treated Stellar consensus formally, by relating it to Bracha's Broadcast Protocol~\cite{bracha1987asynchronous}, build on top of a FBQS.
Their analysis has been later extended~\cite{garcia2019deconstructing} to a variant of state-machine replication protocol that allows \emph{forks}, where 
disjoint intact sets may maintain different copies of the system state.

%
%
%

Recent works~\cite{losa2019stellar,cachin2020asymmetric} investigate more general formalizations of decentralized trust. 
%
%
Cachin and Tackmann~\cite{cachin2020asymmetric} model trust assumptions via an \emph{asymmetric fail-prone system}, an array $[\cF_1,...,\cF_n]$ of adversary structures (or fault models), where each $\cF_i$ is chosen by $p_i$ as its local fault model.
For this model, they devise an \emph{asymmetric Byzantine quorum system} (ABQS), an array of quorum systems $[\cQ_1,...,\cQ_n]$ that should satisfy specific intersection and availability properties, so that certain problems, such as broadcast and storage, can be solved.
%

%
%

%
%
%
%

%
Losa et al.~\cite{losa2019stellar} introduced the notion of a \textit{Personal Byzantine Quorum System} (PBQS), where every process chooses its quorums with the restriction that if $Q$ is a quorum for a process $p$, then $Q$ includes a quorum for every process $q' \in Q$.
The PBQS model is then discussed in relations to Stellar consensus~\cite{mazieres2015stellar}.   
%
More precisely, they characterize the conditions on PBQS under which a \emph{quorum-based} protocol (captured by our Local Progress condition) ensures that a well-defined subset of processes (a \emph{consensus cluster}) can maintain safety and liveness of consensus.

%
In contrast, we allow the processes to directly choose their quorums, and we address the question of what is the ``best'' consistency a cryptocurrency can achieve within this trust model.
The measure of consistency is quantified here as the spending number.
In a way, unlike this prior work on decentralized trust, instead of searching for the weakest trust model that enables solutions to a given problem, we determine the ``strongest'' problem (in a specific class) that is possible to solve in a given model.

%
%
%
%
%

%
%

%
In the context of distributed systems, accountability has been proposed as a mechanism to detect ``observable'' deviations of system nodes from the algorithms they are assigned with~\cite{detection-case,detection-problem,peerreview}.      
Recent proposals~\cite{polygraph,rala} focus on \emph{application-specific} accountability that only heads for detecting misbehavior that affects correctness of the problem to be solved, e.g., consensus~\cite{polygraph} or lattice agreement~\cite{rala}.
Our $k$-SAT algorithm generally follows this approach, except that it implements a relaxed form of asset transfer system, but detects violations that affect correctness of the stronger, conventional asset transfer abstraction~\cite{cons-crypto}.

\section{Discussion and Future Work}
\label{sec:conclusion}

\myparagraph{Generalizing the inconsistency measure.}
%
%
%
Our notion of the inconsistency number of a trust model $(\cQ,\cF)$ serves to quantify the amount of times a process can spend the same input in our cryptocurrency implementation (or the number of distinct values that can be delivered by correct processes in our broadcast abstraction).
As discussed in Section~\ref{sec:clusters}, we can expand the analysis to understand in how many clusters (each of which follows a distinct view of the system state) the set of correct participants can be split.

It is also very interesting to apply ``inconsistency metrics'' for solving other, more general problems in the decentralized trust setting.
%
%
Consider for example \emph{State Machine Replication} (SMR) protocols~\cite{paxos,pbft}.
In these protocols, correct processes agree on a global history of concurrently applied operations, and thus witness the same evolution of the system state. 
One way to relax consistency guarantees of SMR protocols in decentralized trust settings is to bound the number $k$ of diverging histories (the maximum degree of the fork).
The question is then how to relate $k$ to the trust model $(\cQ,\cF)$.

\myparagraph{Reconfiguration of inconsistent states.} 
Our $k$-SAT abstraction provides the application with the history of valid transactions and a record of misbehaving parties.
An important question is left open: 
once correct processes accept conflicting transactions and accusations against Byzantine processes are raised, what is next? Is there a way to render the system back to a consistent state?
Although there is no general answer to these questions --- it comes down to what better suits the application --- we point out some of the suitable strategies. 

A natural response to this is to \emph{reconfigure} both the trust model and the states of the processes, in order to achieve some desired level of consistency.
The immediate use of an accusation $(AC,P)$ is to rearrange the trust assumptions, removing the misbehaving parties $AC$ from the system. For example, the application might use the accusation history to suggest new (improved) quorum systems to system members.

When it comes to reconfiguration of the system states, we face a more challenging task.
%
%
%
%
Indeed, some correct processes may have already used ``compromised'' (multiply spent) assets in their transactions.  
``Merging'' conflicting histories into a consistent global state might affect the stake distribution, which can be hard to resolve without changing the application semantics.
We present two strategies
that make use of the transactions identified in accusation proofs.
The first approach, alluding to the financial system, is to let them keep (and use as input) any accepted conflicting transaction after the misbehaving parties are excluded from the system.
This can have implications on the total stake of the system: depending on how much stake was spent, the total system stake may grow.
The advantage of this approach is that the system can stay live and always give finality to operations once transactions are accepted. 
As a downside, Byzantine processes could fake incoming conflicting transactions in order to grow their own stake.
This can be handled by requiring every process that wants to use a conflicting transaction to prove (by means of collected signatures) that it has indeed received the acknowledgements from a quorum, although this would require trust assumptions to be globally known. 

The second, and probably the most straightforward approach, is to \emph{rollback} any transaction $tx$ appearing in a proof, i.e., removing $tx$ and every transaction depending on $tx$ from transaction histories.
Surely, this comes with the downside of invalidating a previously accepted transaction, which might affect correct system members in real life.
As an alternative way of compensating correct processes in this case, the application might opt to redistribute the stake from the misbehaving parties among the harmed members.

Given a strategy for the reconfiguration of system members and states, an interesting course to follow would be in self-reconfigurable systems~\cite{rala}: the protocol automatically rearrange trust assumptions and merge conflicting histories to keep the system live.

\myparagraph{Composition of trust.}
Alpos et al.~\cite{alpos2021trust} show how to compose trust models of different (possibly disjoint) systems.
%
%
Given two asymmetric fail-prone systems $[\cF_1,...,\cF_n]$ and $[\cF'_1,...,\cF'_m]$ and matching decentralized quorum systems, a \emph{composed} trust model can be constructed. 
%
In the context of our relaxed cryptocurrency protocols, it is appealing to understand how the spending number of a composition of two independent systems may depend on the spending number of its components.

\bibliographystyle{abbrv}

\bibliography{references}

\end{document}